\begin{document}

\title{Presenting Interval Pomsets with Interfaces}

\author{%
  Amazigh Amrane\inst1 \and
  Hugo Bazille\inst1 \and
  Emily Clement\inst2 \and
  Uli Fahrenberg\inst1 \and
  Krzysztof Ziemiański\inst3}

\authorrunning{A.~Amrane, H.~Bazille, E.~Clement, U.~Fahrenberg,
  K.~Ziemiański}

\institute{%
  EPITA Research Laboratory (LRE), Paris, France \and
  Université Paris Cité, CNRS, IRIF, Paris, France \and
  University of Warsaw, Poland}

\maketitle

\begin{abstract}
  \; 
  Interval-order partially ordered multisets with interfaces
  \linebreak 
  (ipomsets)
  have shown to be a versatile model for executions of concurrent systems
  in which both precedence and concurrency need to be taken into account.

  In this paper, we develop a presentation of ipomsets 
  as generated by a graph of certain discrete ipomsets
  (starters and terminators)
  under the relation which composes subsequent starters and subsequent terminators.
  Using this presentation, we show that also subsumptions are generated by elementary relations.
  We develop a similar correspondence on the automata side,
  relating higher-dimensional automata, which generate ipomsets,
  and ST-automata, which generate step sequences, and their respective languages.

  \keywords{%
    pomset with interfaces,
    interval order,
    non-interleaving concurrency,
    higher-dimensional automaton
  }
\end{abstract}

\section{Introduction}

Pomsets with interfaces, or \emph{ipomsets} as they have come to be called,
have recently emerged as a versatile model for executions of concurrent systems
in which both precedence and concurrency need to be taken into account.
Their first appearance was at RAMiCS 2020 in \cite{DBLP:conf/RelMiCS/FahrenbergJST20}
which was based on the realization
that the series-parallel pomsets, which are used heavily in concurrency theory \cite{%
  Pratt86pomsets,
  DBLP:journals/fuin/Grabowski81,
  DBLP:conf/esop/KappeB0Z18,
  DBLP:journals/jlp/HoareMSW11,
  DBLP:journals/jlp/HoareSMSZ16},
fail to model some rather simple concurrent executions.
To overcome this,
\cite{DBLP:conf/RelMiCS/FahrenbergJST20} introduced a generalization of the serial composition of pomsets,
called \emph{gluing},
which may continue events across compositions.

On the other hand, series-parallel pomsets have a nice algebraic characterization,
given that they are the free models of concurrent semirings \cite{%
  DBLP:journals/tcs/Gischer88,
  DBLP:journals/tcs/BloomE96a,
  DBLP:journals/mscs/BloomE97}.
It is therefore natural to ask
whether their generalization
in \cite{DBLP:conf/RelMiCS/FahrenbergJST20} presents similarly nice algebraic properties.
The definitive answer to that question is still out,
but \cite{DBLP:conf/RelMiCS/FahrenbergJST20} and its successor paper \cite{DBLP:journals/iandc/FahrenbergJSZ22}
collect some evidence which weigh to the negative side:
for example, the gluing composition of ipomsets is not cancellative,
and some ipomsets may be decomposed both as gluing and as parallel compositions.

Another class of pomsets which are important in concurrency theory are interval orders \cite{%
  journals/mpsy/Fishburn70,
  book/Fishburn85}.
These are pomsets whose events may be represented as intervals on the real line
and have found their place in relativity \cite{Wiener14},
concurrency theory \cite{%
  DBLP:journals/iandc/JanickiY17,
  DBLP:series/sci/2022-1020,
  DBLP:journals/fuin/JanickiK19},
and distributed computing \cite{%
  DBLP:journals/cacm/Lamport78,
  DBLP:journals/dc/Lamport86,
  DBLP:journals/jacm/CastanedaRR18}.
Their algebraic theory, however, is less developed.
Starting in \cite{journals/mscs/FahrenbergJSZ21},
a notion emerged that for the first, ipomsets in their full generality may not be needed for concurrency
but interval orders suffice, and secondly, that ipomsets might provide a suitable algebraic theory for interval orders.
Picking up on ideas in \cite{%
  DBLP:conf/apn/FahrenbergZ23,
  conf/dlt/AmraneBFF24,
  DBLP:conf/ictac/AmraneBFZ23,
  conf/apn/AmraneBCF24}
and based on the antichain representations of \cite{DBLP:journals/fuin/JanickiK19},
the purpose of this paper is to develop such an algebraic theory of interval orders.

This paper is organized as follows.
In Section \ref{se:ipomsets-step},
we recall pomsets with interfaces
and a special subclass of starters and terminators.
We then show that the category of interval-order ipomsets with interfaces
is isomorphic to a category freely generated by starters and terminators
under a certain congruence~$\sim$.
This is the first major contribution of this paper.
It ultimately builds on work of Janicki and Koutny in \cite{DBLP:journals/fuin/JanickiK19}
and only holds because we work with interval orders.
For general ipomsets the situation seems to be much more complicated~\cite{DBLP:journals/iandc/FahrenbergJSZ22}.

In Section \ref{se:step-subsumptions}, we extend our algebraic treatment to \emph{subsumptions} of ipomsets.
Subsumption is an important notion in concurrency theory \cite{%
  DBLP:journals/tcs/Gischer88,
  DBLP:conf/concur/FanchonM02,
  journals/mscs/FahrenbergJSZ21}
which frequently reasons about models and languages which are closed under subsumptions.
As our second major contribution, we show that subsumptions of interval-order ipomsets
are freely generated by elementary transpositions of starters and terminators up to~$\sim$.

In Section \ref{se:hda-st-a} we extend our results to the operational side.
We recall higher-dimensional automata (HDAs), whose languages are subsumption-closed sets of interval-order ipomsets,
and introduce ST-automata, whose languages are sequences of starters and terminators under~$\sim$.
Precursors of ST-automata have been used in \cite{%
  DBLP:conf/concur/FahrenbergJSZ22,
  DBLP:conf/ictac/AmraneBFZ23,
  conf/apn/AmraneBCF24,
  DBLP:journals/lites/Fahrenberg22,
  DBLP:conf/adhs/Fahrenberg18};
our third major contribution is to make the definition precise (and simpler)
and expose the exact relation between ST-automata and HDAs.
We provide translations in both directions,
but only the translation from HDAs to ST-automata preserves languages
(using the isomorphisms of the previous sections).
The translation from ST-automata to HDAs introduces identifications and closures
which imply that in the general case,
the language of an ST-automaton is only included in that of its corresponding HDA.
We leave open the problem whether there exists a syntactic restriction of ST-automata
on which the translation preserves languages.

\section{Ipomsets and Step Sequences}
\label{se:ipomsets-step}

Let us first define pomsets with interfaces and step sequences.
We fix an alphabet $\Sigma$, finite or infinite, throughout this paper.

\subsection{Pomsets with interfaces}
\label{sse:ipomsets}

An \emph{ipomset} (over $\Sigma$) is a structure
$(P, {<}, {\evord}, S, T, \lambda)$
consisting of the following:
\begin{itemize}
\item a finite set $P$ of \emph{events};
\item a strict partial order (\ie an asymmetric, transitive and thus irreflexive relation)
  ${<}\subseteq P\times P$
  called the \emph{precedence order};
\item a strict partial order ${\evord}\subseteq P\times P$ called the \emph{event order};
\item a subset $S\subseteq P$ called the \emph{source set};
\item a subset $T\subseteq P$ called the \emph{target set}; and
\item a \emph{labeling} $\lambda: P\to \Sigma$.
\end{itemize}

The precedence order is the "usual" order: $a < b$ means that $a$ ended before the beginning of $b$. The event order is necessary in order to define gluing composition, see below.

We require that
\begin{itemize}
\item the relation ${<}\cup {\evord}$ is total,
  \ie for all $x, y\in P$,
  at least one of $x=y$, $x<y$, $y<x$, $x\evord y$, or $y\evord x$ holds;
\item events in $S$ are $<$-minimal in $P$, \ie for all $x\in S$ and $y\in P$, $y\not< x$; and
\item events in $T$ are $<$-maximal in $P$, \ie for all $x\in T$ and $y\in P$, $x\not< y$.
\end{itemize}
We may add subscripts ``${}_P$'' to the elements above if necessary
and omit any empty substructures from the signature.
We will also often use the notation $\ilo{S}{P}{T}$ instead of $(P, {<}, {\evord}, S, T, \lambda)$
if no confusion may arise.

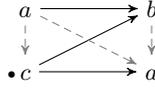
\begin{figure}[tbp]
  \centering
  \begin{tikzpicture}[x=1.2cm, y=1.2cm]
    \node (a) at (0.4,1.4) {$a\vphantom{d}$};
    \node (c) at (0.4,0.7) {$c\vphantom{d}$};
    \node at (0.25,0.7) {$\ibullet\vphantom{d}$};
    \node (b) at (1.8,1.4) {$b\vphantom{d}$};
    \node (d) at (1.8,0.7) {$a\vphantom{d}$};
    \path (a) edge (b) (c) edge (d) (c) edge (b);
    \path[densely dashed, gray] (b) edge (d) (a) edge (d) (a) edge (c);
  \end{tikzpicture}
  \caption{An interval ipomset, \cf Ex.~\ref{ex:n}.}
  \label{fi:n}
\end{figure}

\begin{example}
  \label{ex:n}
  Figure~\ref{fi:n} depicts an ipomset $P=\{x_1, x_2, x_3, x_4\}$ with four events
  labelled $\lambda(x_1)= \lambda(x_4)=a$, $\lambda(x_2)=b$ and $\lambda(x_3)=c$.
  (We do not show the identity of events, only their labels.)
  Its precedence order is given by $x_1<x_2$, $x_3<x_2$ and $x_3<x_4$
  and its event order by $x_1\evord x_3$, $x_1\evord x_4$ and $x_2\evord x_4$.
  The sources are $S=\{x_3\}$ and the targets $T=\emptyset$.
  (We denote these by ``$\ibullet$''.)
  We think of events in $S$ as being already active at the beginning of $P$,
  and the ones in $T$ (here there are none) continue beyond the ipomset $P$.
\end{example}

An ipomset $(P, {<}, {\evord}, S, T, \lambda)$ is
\begin{itemize}
\item \emph{discrete} if ${<}$ is empty (hence $\evord$ is total);
\item a \emph{pomset} if $S=T=\emptyset$;
\item a \emph{conclist} (short for ``concurrency list'') if it is a discrete pomset;
\item a \emph{starter} if it is discrete and $T=P$;
\item a \emph{terminator} if it is discrete and $S=P$; and
\item an \emph{identity} if it is both a starter and a terminator.
\end{itemize}

A conclist is, thus, a pomset of the form $(P, \emptyset, \evord, \emptyset, \emptyset, \lambda)$;
to ease notation, we will omit the $\emptyset$ from the notation.

The \emph{source interface} of an ipomset $(P, {<}, {\evord}, S, T, \lambda)$ as above is the conclist $S_P=(S,  {\evord\rest{S\times S}}, \lambda\rest{S})$
where ``${}\rest{}$'' denotes restriction;
the \emph{target interface} of $P$ is the conclist $T_P=(T, {\evord\rest{T\times T}}, \lambda\rest{T})$.

We call a starter or terminator \emph{elementary} if $|S| = 1$, resp.\ $|T| = 1$, that is if 
it starts or terminates exactly one event.
In the following, a discrete ipomset will be represented by a vertical ordering of its elements following the event order, with elements in the source (resp.\ target) set preceded (resp.\ followed) by the symbol $\ibullet$.
\begin{example}
    The ipomset $(\{x_1, x_2, x_3\}, \emptyset, x_1 \evord x_2 \evord x_3, \{x_1, x_2\}, \{x_2, x_3\},$ $\{(x_1, a), (x_2, b), (x_3, c)\})$ can be represented by $\loset{\ibullet a\pibullet \\ \ibullet b \ibullet \\ \pibullet c \ibullet}$.
\end{example}

An ipomset $P$ is \emph{interval}
if $<_P$ is an interval order~\cite{book/Fishburn85},
\ie if it admits an interval representation
given by functions $b, e: (P, {<_P})\to (\Real, {<_\Real})$ such that
\begin{itemize}
\item $b(x)\le_\Real e(x)$ for all $x\in P$ and
\item $x<_P y$ iff $e(x)<_\Real b(y)$ for all $x, y\in P$.
\end{itemize}
That is, every element $x$ of $P$ is associated with a real interval $[b(x), e(x)]$
such that $x<y$ in $P$ iff the interval of $x$ ends before the one of $y$ begins.
The ipomset of Fig.~\ref{fi:n} is interval.
We will only treat interval ipomsets in this paper and thus omit the qualification ``interval''.

The set of (interval) ipomsets is written $\iPoms$.
We also denote
\begin{itemize}
\item by $\square$ the set of conclists $U=(U, {\evord}, \lambda)$;
\item by $\St$ the set of starters $\ilo{S}{U}{U}=(U, {\evord}, S, U, \lambda)$;
\item by $\Te$ the set of terminators $\ilo{U}{U}{T}=(U, {\evord}, U, T, \lambda)$;
\item by $\Id=\St\cap \Te$ the set of identities $\ilo{U}{U}{U}=(U, {\evord}, U, U, \lambda)$;
\item and let $\Omega=\St\cup \Te$, $\St_+=\St\setminus \Id$, and $\Te_+=\Te\setminus \Id$.
\end{itemize}

An \emph{isomorphism} of ipomsets $P$ and $Q$
is a bijection $f: P\to Q$ for which
\begin{enumerate}
\item $f(S_P)=S_Q$; $f(T_P)=T_Q$; $\lambda_Q\circ f=\lambda_P$;
\item $f(x)<_Q f(y)$ iff $x<_P y$; and
\item $x\not<_P y$ and $y\not<_P x$ imply that $x\evord_P y$ iff $f(x)\evord_Q f(y)$.
\end{enumerate}
We write $P\simeq Q$ if $P$ and $Q$ are isomorphic.
The third axiom demands that if $x$ and $y$ are concurrent
and hence ordered by $\evord_P$, then $f$ respects that order.
Because of transitivity, event order may also appear between non-concurrent events;
isomorphisms ignore such inessential event order.

Due to the requirement that all elements are ordered by $<$ or $\evord$ and that P is finite, 
there is at most one isomorphism between any two ipomsets.
The following lemma is trivial but rather important;
it states that we may always choose representatives in isomorphism classes
such that isomorphisms become equalities.

\begin{lemma}
  \label{le:isoeq}
  For any ipomsets $P$ and $Q$ with $T_P\simeq S_Q$ there exists $Q'\simeq Q$
  such that $T_P=S_{Q'}=P\cap Q'$. \qed
\end{lemma}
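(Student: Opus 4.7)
The plan is to construct $Q'$ by renaming the underlying set of $Q$ via a suitable bijection, transporting all the ipomset structure along it. The isomorphism $T_P\simeq S_Q$ supplies the identification we want on the overlap, and fresh names handle the rest.

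More concretely, let $\varphi: S_Q\to T_P$ be the (unique) isomorphism of conclists. Since sets are plentiful, choose any injection $\psi: Q\setminus S_Q\to X$ into some set $X$ disjoint from $P$ and from $T_P$; for instance, take $\psi$ a bijection onto a fresh copy of $Q\setminus S_Q$. Combine these to obtain a bijection
\[
  f: Q\longrightarrow Q',\qquad f(x)=\begin{cases}\varphi(x) & \text{if } x\in S_Q,\\ \psi(x) & \text{otherwise},\end{cases}
\]
where $Q':=T_P\cup\psi(Q\setminus S_Q)$. By construction $Q'\cap P=T_P$ (since $\psi(Q\setminus S_Q)$ is disjoint from $P$ and $T_P\subseteq P$), and $f$ is a bijection onto $Q'$.

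Now transport the structure of $Q$ along $f$: define $<_{Q'}$, $\evord_{Q'}$, $\lambda_{Q'}$, $S_{Q'}$, $T_{Q'}$ to be the images under $f$ of the corresponding data of $Q$. Then $f$ is automatically an isomorphism $Q\simeq Q'$, and $S_{Q'}=f(S_Q)=\varphi(S_Q)=T_P$, as required. The only thing that needs checking is that on the overlap $S_{Q'}=T_P$, the conclist structure inherited from $Q$ via $f$ coincides with the conclist structure of $T_P$ inherited from $P$ — but this is precisely the statement that $\varphi$ is a conclist isomorphism, so $\evord_{Q'}\rest{T_P}=\evord_P\rest{T_P}$ and $\lambda_{Q'}\rest{T_P}=\lambda_P\rest{T_P}$.

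There is no real obstacle here; the lemma is a bookkeeping statement about choosing a set-theoretic representative of an isomorphism class. The only subtlety worth flagging is uniqueness of $\varphi$, which was noted just before the lemma (between any two ipomsets there is at most one isomorphism), so the identification of $S_Q$ with $T_P$ is canonical and no coherence choices need to be made.
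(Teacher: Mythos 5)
Your proof is correct and is precisely the renaming/transport-of-structure argument the paper has in mind when it labels the lemma ``trivial'' and omits the proof entirely; your extra care about the conclist structure agreeing on the overlap and about the uniqueness of $\varphi$ is exactly the right bookkeeping. Nothing is missing.
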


Let us recall the definition of the gluing operation of ipomsets.
\begin{definition} 
  \label{de:gluing}
  Let $P$ and $Q$ be two ipomsets with $T_P\simeq S_Q$. 
  The \emph{gluing} of $P$ and $Q$ is defined
  as $P*Q=(R, {<}, {\evord}, S, T, \lambda)$ where:
  \begin{enumerate}
  \item $R=(P\sqcup Q)_{x=f(x)}$, the quotient of the disjoint union under the unique isomorphism $f: T_P\to S_Q$;
  \item ${<}=\big(
    \{(i(x), i(y))\mid x<_P y\}
    \cup \{(j(x), j(y))\mid x<_Q y\}
    \cup \{(i(x), j(y))\mid x\in P\setminus T_P, y\in Q\setminus S_Q\}
    \big)^+$,
    where $i: P\to R$ and $j: Q\to R$ are the injections and\/ ${}^+$ denotes transitive closure;
  \item ${\evord}=\{(i(x), i(y)\mid x\evord_P y\}\cup \{(j(x), j(y)\mid x\evord_Q y\}$;
  \item $S=i(S_P)$; $T=j(T_Q)$;
  \item $\lambda(i(x))=\lambda_P(x)$, $\lambda(j(x))=\lambda_Q(x)$.
  \end{enumerate}
\end{definition}

\begin{remark}
    The relation $\evord$ is automatically transitive.
    On another note, composition is not cancellative:
    for example, $a\ibullet * \loset{\ibullet a\\\phantom{\ibullet} a} = a\ibullet * \loset{\phantom{\ibullet} a\\\ibullet a}$.
\end{remark}

Gluings of isomorphic ipomsets are isomorphic.
The next lemma extends Lem.~\ref{le:isoeq} and follows directly from it.

\begin{lemma}
  \label{le:isoeqglu}
  For any ipomsets $P$ and $Q$ with $T_P\simeq S_Q$ there exist $Q'\simeq Q$ and $R\simeq P*Q$
  such that
  $T_P=S_{Q'}=P\cap Q'$,
  $R=P\cup Q'$,
  ${<_R}=\big({<_P}\cup {<_{Q'}}\cup (P\setminus Q')\times (Q'\setminus P)\big)^+$,
  ${\evord_R}={\evord_P}\cup {\evord_{Q'}}$,
  $S_R=S_P$, and
  $T_R=T_{Q'}$. \qed
\end{lemma}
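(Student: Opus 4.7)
The plan is to reduce everything to the concrete construction of gluing in Def.~\ref{de:gluing} after using Lem.~\ref{le:isoeq} to pick nice representatives.

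First, I apply Lem.~\ref{le:isoeq} to obtain $Q'\simeq Q$ with $T_P = S_{Q'} = P\cap Q'$ as actual sets (not merely isomorphic ones). Since gluings of isomorphic ipomsets are isomorphic, the ipomset $P*Q'$ built by Def.~\ref{de:gluing} is isomorphic to $P*Q$; I set $R := P*Q'$.

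The key observation is that, with $T_P = S_{Q'} = P\cap Q'$, the unique isomorphism $f: T_P \to S_{Q'}$ appearing in Def.~\ref{de:gluing}(1) is forced to be the identity (by the uniqueness of isomorphisms between ipomsets noted just before Lem.~\ref{le:isoeq}). Consequently, the disjoint-union-modulo-identification $R = (P\sqcup Q')_{x = f(x)}$ collapses to the ordinary union $P\cup Q'$, and the canonical injections $i: P\to R$, $j: Q'\to R$ become the set-theoretic inclusions.

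Under these identifications, each clause of Def.~\ref{de:gluing} rewrites to the corresponding clause of the present lemma: clause (2) becomes exactly the stated formula for $<_R$ (noting $P\setminus T_P = P\setminus Q'$ and $Q'\setminus S_{Q'} = Q'\setminus P$); clause (3) yields $\evord_R = \evord_P\cup \evord_{Q'}$; and clause (4) gives $S_R = S_P$ and $T_R = T_{Q'}$. No step requires real work once the identification $f = \id$ is in place, so there is no genuine obstacle; the only thing to be careful about is invoking uniqueness of isomorphisms to justify that the quotient in Def.~\ref{de:gluing}(1) is literally the union.
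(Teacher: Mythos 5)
Your proof is correct and follows exactly the route the paper intends: the paper gives no explicit argument, remarking only that the lemma ``follows directly from'' Lem.~\ref{le:isoeq}, and your write-up fills in precisely those details (choosing $Q'$ via Lem.~\ref{le:isoeq}, noting the unique isomorphism $f\colon T_P\to S_{Q'}$ must be the identity, and reading off each clause from Def.~\ref{de:gluing}). Nothing is missing.
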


The following is clear and shown in \cite{DBLP:journals/iandc/FahrenbergJSZ22}.

\begin{proposition}
  Isomorphism classes of ipomsets form a category $\iPomss$:
  \begin{itemize}
  \item objects are isomorphism classes of conclists;
  \item morphisms in $\iPomss(U, V)$ are isomorphism classes of ipomsets $P$ with $S_P=U$ and $T_P=V$;
  \item composition of morphisms is gluing;
  \item identities are $\id_U=\ilo{U}{U}{U}\in \iPomss(U, U)$. \qed
  \end{itemize}
\end{proposition}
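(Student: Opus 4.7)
The plan is to verify the three category axioms (well-definedness of composition on isomorphism classes, identity laws, and associativity), using Lem.~\ref{le:isoeqglu} throughout to replace the quotient-of-disjoint-union construction in Def.~\ref{de:gluing} by an honest set-theoretic union. I would also need to check that $P*Q$ really is an (interval) ipomset whenever $T_P\simeq S_Q$, which amounts to four observations: that ${<_{P*Q}}$ is irreflexive (an easy argument since any purported $<$-cycle in the transitive closure would project to a cycle in $P$ or $Q$), that ${<_{P*Q}}\cup{\evord_{P*Q}}$ is total (verified case-by-case using totality in $P$, in $Q$, and the new cross edges), that $S_{P*Q}$ is $<$-minimal and $T_{P*Q}$ is $<$-maximal (immediate from the definitions), and that the interval property is preserved (concatenate interval representations after a time shift that places all of $P$ strictly before all of $Q\setminus S_Q$).

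For well-definedness on isomorphism classes, given $P\simeq P'$ and $Q\simeq Q'$ with compatible interfaces, I would glue the isomorphisms along the shared interface to produce a bijection $P*Q\to P'*Q'$ and check it respects the structure; this is routine since gluing is specified purely in terms of the given data. For the identity laws, choose a representative of $P$ and set $U=S_P$; then the gluing $\id_U*P$, computed via Lem.~\ref{le:isoeqglu} with $\id_U\cap P=U$, has underlying set $U\cup P=P$, the same precedence order (no new cross edges, since $\id_U\setminus T_{\id_U}=\emptyset$), the same event order, and the same labels, source, and target. Symmetrically on the right.

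The main work is associativity. Iterating Lem.~\ref{le:isoeqglu}, I would choose representatives of $P,Q,R$ with $T_P=S_Q=P\cap Q$ and $T_Q=S_R=Q\cap R$ and the additional disjointness $P\cap R=T_P\cap S_R\subseteq Q$ (which can be arranged by renaming the events of $R$ away from $P$ outside the forced overlap). Then both $(P*Q)*R$ and $P*(Q*R)$ have the same underlying set $P\cup Q\cup R$, the same event order ${\evord_P}\cup{\evord_Q}\cup{\evord_R}$, source $S_P$, target $T_R$, and labeling. The precedence order in both cases is the transitive closure of ${<_P}\cup{<_Q}\cup{<_R}$ together with all cross pairs $(P\setminus T_P)\times(Q\setminus S_Q)$, $(Q\setminus T_Q)\times(R\setminus S_R)$, and $(P\setminus T_P)\times(R\setminus S_R)$ (the last of which arises in both groupings by transitivity through any event of $Q\setminus(S_Q\cup T_Q)$, or is directly added in the outer gluing).

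The step I expect to be most delicate is checking that the cross relations used in $(P*Q)*R$ agree with those in $P*(Q*R)$ after transitive closure, since the outer gluing uses the source and target of the already-glued ipomset, and one must verify that $S_{P*Q}=S_P$ and $T_{P*Q}=T_Q$ as \emph{interfaces} (which they are by construction) before the recursion goes through. Once this bookkeeping is laid out, the equality of the two transitive closures is immediate.
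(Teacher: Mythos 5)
The paper does not prove this proposition at all: it is stated with the remark ``The following is clear and shown in \cite{DBLP:journals/iandc/FahrenbergJSZ22}'' and closed with \qed, so the only ``official'' proof is a citation. Your proposal is a genuine, self-contained verification, and its overall structure (closure of interval ipomsets under gluing, well-definedness on isomorphism classes, unit laws, associativity via Lem.~\ref{le:isoeqglu}-style representatives with $T_P=S_Q=P\cap Q$, $T_Q=S_R=Q\cap R$, $P\cap R=T_P\cap S_R$) is the standard one and is sound. Three small points where the sketch is looser than it should be, all repairable: (i) for irreflexivity, ``a cycle would project to a cycle in $P$ or $Q$'' is not literally true of an arbitrary cycle in a transitive closure; the actual reason is that cross edges go only from $P\setminus T_P$ to $Q\setminus S_Q$, and since $T_P$-events are $<_P$-maximal and $S_Q$-events are $<_Q$-minimal, any $<$-chain moves monotonically through the blocks $P\setminus T_P$, $T_P=S_Q$, $Q\setminus S_Q$ and hence cannot close up; (ii) for the interval property, concatenating representations requires stretching the intervals of the shared events $T_P=S_Q$ across the junction (left endpoint from $P$'s representation, right endpoint from $Q$'s shifted one) --- worth saying explicitly, since these are exactly the events whose intervals must straddle both time ranges; (iii) in the associativity step, your parenthetical that the cross pairs $(P\setminus T_P)\times(R\setminus S_R)$ arise ``by transitivity through any event of $Q\setminus(S_Q\cup T_Q)$'' would fail when that set is empty, but in fact no transitivity is needed: with your choice of representatives these pairs are added directly in \emph{both} groupings, since $(P\setminus T_P)\cap T_Q=\emptyset$ and $(R\setminus S_R)\cap S_Q=\emptyset$. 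With those clarifications your argument is complete and does honestly what the paper delegates to the reference.
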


In analogy to $\iPomss$ we will also write $\Omegas$, $\Sts$ etc.\ for subsets of ipomsets-up-to-isomorphism.

\subsection{Starters and terminators} \label{sse:sandt}

We develop a representation of the category $\iPomss$ by generators and relations,
using the step decompositions introduced in \cite{DBLP:conf/apn/FahrenbergZ23}.

Let $\bOmegas$ be the directed multigraph given as follows:
\begin{itemize}
\item Vertices are isomorphism classes of conclists.
\item Edges in $\bOmegas(U, V)$ are isomorphism classes of starters and terminators $P$ with $S_P=U$ and $T_P=V$.
\end{itemize}
Note that for all $U, V\in \square_\simeq$, $\bOmegas(U, V)\subseteq \Sts$ or $\bOmegas(U, V)\subseteq \Tes$.

Let $\Coh$ be the free category generated by $\bOmegas$.
Then non-identity morphisms in $\Coh(U,V)$ are words
\begin{equation}
  P_1\dotsc P_n\in \Omegas^+
\end{equation}
such that $T_{P_i}=S_{P_{i+1}}$ for all $i=1,\dots, n-1$.
Such words are called \emph{coherent} in \cite{DBLP:conf/ictac/AmraneBFZ23,conf/dlt/AmraneBFF24}.
Note that $P_1\dots P_n$ is coherent iff the gluing $P_1*\dots* P_n$ is defined.

The following property is immediate from Lemma \ref{le:isoeqglu}.
It permits to choose the representant of a coherent word so that events overlap
and will often be used implicitly in the following.

\begin{lemma}
  \label{le:cohcoh}
  For every $U, V\in \square_\simeq$
  and every non-identity morphism $P_1\dots P_n\in \Coh(U, V)$ there is $Q_1\dots Q_n\in \Coh(U, V)$
  such that $P_1=Q_1$ and for all $i=2,\dots, n$,
  $P_i\simeq Q_i$ and $T_{Q_{i-1}}=S_{Q_i}=Q_{i-1}\cap Q_i$. \qed
\end{lemma}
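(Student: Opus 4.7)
The plan is to prove the statement by straightforward induction on $n$, using Lemma~\ref{le:isoeq} repeatedly to ``realign'' each successive representative so that its source interface coincides on the nose with the target interface of its predecessor.

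For the base case $n=1$ we simply take $Q_1 = P_1$. For the inductive step, suppose that $Q_1, \dots, Q_{n-1}$ have already been constructed with $Q_1 = P_1$, $Q_i \simeq P_i$ for $2 \le i \le n-1$, and $T_{Q_{i-1}} = S_{Q_i} = Q_{i-1} \cap Q_i$ for $2 \le i \le n-1$. Because $P_1 \dots P_n$ is coherent, $T_{P_{n-1}} = S_{P_n}$ as isomorphism classes of conclists. Since $Q_{n-1} \simeq P_{n-1}$ (which holds by $Q_1 = P_1$ when $n=2$ and by the induction hypothesis otherwise), we get $T_{Q_{n-1}} \simeq T_{P_{n-1}} = S_{P_n}$. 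Applying Lemma~\ref{le:isoeq} to the pair $(Q_{n-1}, P_n)$ then yields some $Q_n \simeq P_n$ with $T_{Q_{n-1}} = S_{Q_n} = Q_{n-1} \cap Q_n$, as required.

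It remains to check that $Q_1 \dots Q_n$ is still a valid morphism in $\Coh(U,V)$. Coherence is immediate: the equalities $T_{Q_{i-1}} = S_{Q_i}$ hold on the nose for all $i$, and in particular up to isomorphism. The endpoints are correct because $S_{Q_1} = S_{P_1} = U$ (since $Q_1 = P_1$) and $T_{Q_n}$ lies in the same isomorphism class as $T_{P_n} = V$, which is all that is required for membership in $\Coh(U,V)$.

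There is no real obstacle here: the lemma is essentially a word-length iteration of Lemma~\ref{le:isoeq}, and the only point worth highlighting is that we modify only the representatives $P_2, \dots, P_n$ while keeping $Q_1 = P_1$ fixed, which is possible because Lemma~\ref{le:isoeq} permits us to replace the \emph{second} ipomset in each pair by an isomorphic copy.
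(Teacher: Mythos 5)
Your proof is correct and matches the paper's intent: the paper declares the lemma immediate from the realignment lemmas (Lemmas~\ref{le:isoeq} and~\ref{le:isoeqglu}) and gives no further argument, and your induction is exactly the straightforward iteration of that realignment that the authors have in mind. The only remark worth making is that since morphisms in $\Coh(U,V)$ are words of isomorphism classes, replacing each $P_i$ by an isomorphic $Q_i$ automatically yields the same morphism, which you correctly note at the end.
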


Let $\sim$ be the congruence on $\Coh$ generated by the relations
\begin{equation}
  \label{eq:sim-stepseq}
  \begin{alignedat}{2}
    P Q &\sim P*Q \qquad &&(P, Q\in \St \text{ or } P, Q\in \Te),
    \\
    \id_U &\sim \ilo{U}{U}{U} \qquad &&(U\in \square). 
  \end{alignedat}
\end{equation}
The first of these allows to compose subsequent starters and subsequent terminators, and the second identifies the (freely generated) identities at $U$
with the corresponding ipomset identities in $\Id$.
(Note that the gluing of two starters is again a starter, and similarly for terminators; but ``mixed'' gluings do not have this property.)
It is clear that $\sim$ is compatible with ipomset isomorphism.
We let $\Cohs$ denote the quotient of $\Coh$ under $\sim$.

Let $\Psip:\Coh\to \iPomss$ be the functor induced by the inclusion:
\begin{equation*}
  \Psip(U)=U,\qquad \Psip(P)=P.
\end{equation*}
Then $\Psip(P_1\dots P_n)=P_1*\dots*P_n$.
The following is straightforward.

\begin{lemma}
  If $P_1\dots P_n\sim Q_1\dots Q_m$, then $\Psip(P_1\dotsc P_n)=\Psip(Q_1\dots Q_m)$. \qed
\end{lemma}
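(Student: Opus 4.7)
The plan is a short structural argument: since $\sim$ is defined as the congruence on $\Coh$ generated by the two families of relations in \eqref{eq:sim-stepseq}, it suffices to verify that the functor $\Psip$ identifies the two sides of every generating relation, and then invoke functoriality to extend the identification along composition and reflexive/symmetric/transitive closure.

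First, I would observe that because $\Psip$ is a functor, for any word $P_1\dotsc P_n\in\Coh$ we have $\Psip(P_1\dotsc P_n)=\Psip(P_1)*\dotsb*\Psip(P_n)=P_1*\dotsb*P_n$, and this is compatible with concatenation of morphisms in $\Coh$. Hence, if $u\sim u'$ implies $\Psip(u)=\Psip(u')$ for every generating pair $(u,u')$, then the same implication passes to the congruence closure: for any context $v,w\in\Coh$ composable with $u$ and $u'$, $\Psip(vuw)=\Psip(v)*\Psip(u)*\Psip(w)=\Psip(v)*\Psip(u')*\Psip(w)=\Psip(vu'w)$.

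Second, I would check the two generators. For a pair $P,Q\in\St$ (or $P,Q\in\Te$) with $T_P=S_Q$, the left-hand side gives $\Psip(PQ)=P*Q$, while the right-hand side is the single-letter word $P*Q\in\Omegas$ (which lies in $\Sts$, resp.\ $\Tes$, since the gluing of two starters is a starter), and $\Psip$ applied to a single-letter word returns that letter, so both sides equal $P*Q$ in $\iPomss$. For the second generator, $\Psip(\id_U)$ is by definition the identity morphism $\id_U\in\iPomss(U,U)$, which by the proposition characterizing $\iPomss$ is exactly $\ilo{U}{U}{U}=\Psip(\ilo{U}{U}{U})$.

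I expect no real obstacle here: the only substantive content is that gluing of two starters is again a starter (and similarly for terminators), so that the right-hand side of the first generator indeed lies in the domain of $\Psip$ and is a single edge of $\bOmegas$. Once this is noted, the conclusion follows by a one-line induction on the length of a derivation $P_1\dotsc P_n\sim Q_1\dotsc Q_m$ in the congruence $\sim$. \qed
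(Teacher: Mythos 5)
Your proof is correct and fills in exactly the routine verification that the paper leaves implicit (the lemma is stated with ``straightforward'' and no written proof): check the two generating relations of $\sim$ under $\Psip$ and extend by functoriality to the congruence closure. No issues.
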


Thus $\Psip$ induces a functor $\Psi:\Cohs\to \iPomss$; we show below that $\Psi$ is an isomorphism of categories.

\subsection{Step sequences}
\label{sse:step-seq}

A \emph{step sequence} \cite{conf/apn/AmraneBCF24} is a morphism in $\Cohs$,
that is, an equivalence class of coherent words under $\sim$. It is shown in \cite{DBLP:conf/apn/FahrenbergZ23} that every ipomset may be decomposed into a step sequence:

\begin{lemma}[\cite{DBLP:conf/apn/FahrenbergZ23}]
\label{le:stepdecomp}
  For every $P\in \iPomss$ there exists $w\in \Cohs$ such that $\Psi(w)=P$.
\end{lemma}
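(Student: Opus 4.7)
The plan is to construct $w$ directly from an interval representation of $P$. Fix functions $b, e: P \to \Real$ satisfying the interval representation conditions. Because events in $S_P$ (resp.\ $T_P$) are $<$-minimal (resp.\ $<$-maximal), their $b$-values (resp.\ $e$-values) can be pushed to a common value $\tau_0$ (resp.\ $\tau_m$) without altering $<_P$. A further generic perturbation renders all other $b, e$ values mutually distinct and strictly within $(\tau_0, \tau_m)$; this preserves $<_P$ because one only exploits the slack available on concurrent pairs $x,y$ (where $e(x) \ge b(y)$ and $e(y) \ge b(x)$).

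Enumerate the resulting transition times $\tau_0 < \tau_1 < \dots < \tau_m$ and, for $0 \le i \le m-1$, let $U_i \subseteq P$ be the set of events active on the open interval $(\tau_i, \tau_{i+1})$. By the normalization, $U_0 = S_P$ and $U_{m-1} = T_P$, and for each $1 \le i \le m-1$ the transition at $\tau_i$ either starts a single event ($U_i = U_{i-1} \cup \{x\}$, $\tau_i = b(x)$) or terminates a single event ($U_i = U_{i-1} \setminus \{x\}$, $\tau_i = e(x)$). Define $P_i$ to be the corresponding elementary starter or terminator on support $U_{i-1} \cup U_i$, with labeling and event order inherited from $P$. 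All events in $U_{i-1} \cup U_i$ share a common time of activity on one side of $\tau_i$ and are therefore pairwise concurrent in $P$, so $\evord_P$ restricts to a total order there, making each $P_i$ a well-defined discrete ipomset. Since $T_{P_i} = U_i = S_{P_{i+1}}$ by construction, the word $w = P_1 \dotsc P_{m-1}$ is coherent and descends to a class in $\Cohs$.

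It remains to verify $\Psi(w) = P_1 * \dotsc * P_{m-1} \simeq P$. Source, target, underlying set, labeling, and event order match essentially by construction, using Lemma~\ref{le:isoeqglu} iteratively. The crucial check is the precedence order: by Definition~\ref{de:gluing} together with associativity of gluing, $x <_{\Psi(w)} y$ is the transitive closure of the relation ``$x \in P_i \setminus T_{P_i}$ and $y \in P_j \setminus S_{P_j}$ for some $i < j$''. By distinctness, each non-target $x$ has a unique terminating index $i(x)$ with $\tau_{i(x)} = e(x)$, and each non-source $y$ has a unique starting index $j(y)$ with $\tau_{j(y)} = b(y)$, so $x <_{\Psi(w)} y \iff i(x) < j(y) \iff e(x) < b(y) \iff x <_P y$. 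Events in $T_P$ (resp.\ $S_P$) remain $<$-maximal (resp.\ $<$-minimal) since they never leave the target (resp.\ source) side of any $P_i$. The principal technical subtlety is justifying the initial normalization of $b, e$ so that source events share a start time, target events share an end time, and all remaining values become distinct, all while preserving $<_P$ exactly; once this setup is secured, the remainder is essentially a definitional unwinding.
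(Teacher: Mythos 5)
The paper does not actually prove this lemma---it imports it from the cited reference \cite{DBLP:conf/apn/FahrenbergZ23}---and your argument is precisely the standard proof given there: normalize an interval representation so that all source events start together, all target events end together, and the remaining endpoints are distinct, then slice at the endpoints to obtain a dense coherent word of elementary starters and terminators whose gluing recovers $<_P$ via $e(x)<b(y)$. Your proof is correct (the perturbation step you flag does go through by a routine smallest-gap argument, and the only points left implicit are the degenerate cases $S_P=\emptyset$, $T_P=\emptyset$, or $P$ an identity, where one simply adjoins the extreme times $\tau_0$, $\tau_m$ artificially), so there is nothing substantive to object to.
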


A word $P_1\dots P_n\in \Coh$ is \emph{dense} if all its elements are elementary, \ie start or terminate precisely one event. 
It is \emph{sparse} if starters and terminators are alternating, that is, for all $i=1,\dots, n-1$, $(P_i,P_{i+1}) \in (\St_+ \times \Te_+) \cup (\Te_+ \times \St_+)$.
By convention, identities $\id_U\in \Coh$ are both dense and sparse.

\begin{lemma}[\cite{DBLP:conf/apn/FahrenbergZ23}]
\label{le:sparseuniq}
  Every step sequence contains exactly one sparse representative.
\end{lemma}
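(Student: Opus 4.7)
The plan is to derive this from a directed rewriting argument and Newman's lemma. Orient the two clauses of~\eqref{eq:sim-stepseq}: write $PQ \rightsquigarrow P*Q$ whenever both $P,Q\in\St$ or both $P,Q\in\Te$, and $\ilo{U}{U}{U} \rightsquigarrow \id_U$ for each $U\in\square$. Every rewrite strictly shortens the coherent word, so the system terminates and normal forms exist. The key observation is that irreducibility coincides with sparseness: on a word of length $\geq 2$, irreducibility forbids any consecutive pair from lying in $(\St\times\St)\cup(\Te\times\Te)$, and since $\Omega=\St\cup\Te$ and $\Id=\St\cap\Te$, this forces each pair into $(\St_+\times\Te_+)\cup(\Te_+\times\St_+)$---exactly the definition of sparseness. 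Lengths~$\leq 1$ are handled by convention together with the second rewrite, which collapses $\ilo{U}{U}{U}$ onto~$\id_U$. Iterating rewrites from any representative therefore produces a sparse representative, giving existence.

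For uniqueness I would invoke Newman's lemma: since the system is terminating, local confluence implies confluence, hence uniqueness of normal forms. Local confluence reduces to checking critical pairs. Non-overlapping rewrites commute trivially, and the second rewrite interacts only with isolated~$\ilo{U}{U}{U}$ letters. The essential overlap is a triple $P_1P_2P_3$ with both $P_1P_2$ and $P_2P_3$ reducible by the first rule. If all three are starters (or all terminators), joinability follows from associativity of gluing: $(P_1*P_2)*P_3 = P_1*(P_2*P_3)$. Otherwise the type alternates through the triple, forcing $P_2\in\St\cap\Te=\Id$; both reductions then absorb~$P_2$ via $P_1*\id_U=P_1$ and $\id_U*P_3=P_3$, yielding the common word $P_1P_3$. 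This exhausts the critical pairs.

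The main obstacle I anticipate is bookkeeping rather than conceptual. Associativity of gluing holds only up to isomorphism, so for the critical-pair joinability argument to land in a single literal word one must choose representatives consistently, invoking Lemma~\ref{le:isoeqglu} to align target interfaces with subsequent source interfaces so that the gluings coincide on the nose. One must also verify that the oriented rewriting still generates the full congruence~$\sim$---immediate since each rule is exactly a generator of~$\sim$. With these alignment issues settled, termination and the critical-pair analysis are routine.
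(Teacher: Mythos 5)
Your argument is correct, and it takes a genuinely different route from the paper: the paper does not prove Lemma~\ref{le:sparseuniq} at all, but cites \cite{DBLP:conf/apn/FahrenbergZ23}, offering only the remark that existence follows by greedily gluing adjacent starters (resp.\ terminators) until no gluing remains possible, while uniqueness is ``more tedious''. Your view of greedy gluing as a terminating rewriting system recovers the existence half exactly, and your Newman-style confluence analysis supplies a self-contained uniqueness proof that the paper outsources. What this buys is modularity: termination is immediate (each rule shortens the word), the identification of irreducible words with sparse ones is a short case analysis using $\St\cap\Te=\Id$, and uniqueness reduces to a finite list of critical pairs, each joinable either by associativity of gluing or by absorbing an identity letter. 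Three small points to tighten. First, the claim that the rule $\ilo{U}{U}{U}\to\id_U$ ``interacts only with isolated'' identity letters is not quite right: an identity letter adjacent to another letter is simultaneously a redex for both rules, giving one more critical pair; it is trivially joinable, since $P*\ilo{U}{U}{U}=P$ coincides with deleting the identity letter, but it should be listed. Second, the associativity bookkeeping you worry about largely dissolves: the letters of words in $\Coh$ are already isomorphism classes (edges of $\bOmegas$), and gluing of isomorphism classes is strictly associative because it is composition in the category $\iPomss$, so Lemma~\ref{le:isoeqglu} is not actually needed for the critical-pair computation. Third, a cosmetic corner case: a one-letter word consisting of $\ilo{U}{U}{U}$ is vacuously ``sparse'' under a literal reading of the definition, yet it is reducible and $\sim$-equivalent to $\id_U$; for the statement to hold on the nose one must read sparseness as requiring every letter to lie in $\St_+\cup\Te_+$, which is what your identification of irreducible with sparse implicitly does.
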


Showing existence of sparse decompositions is easy and consists of gluing starters and terminators until no more such gluing is possible. Showing uniqueness is more tedious, see \cite{DBLP:conf/apn/FahrenbergZ23}.

\begin{example}
  The unique sparse step decomposition of the ipomset in Fig.~\ref{fi:n} is
  \begin{equation*}
    P = \bigloset{ \phantom{\ibullet} a \ibullet \\ \ibullet c \ibullet} 
    \bigloset{\ibullet a \ibullet \\ \ibullet c\phantom{\ibullet}}
    \bigloset{\ibullet a \ibullet \\ \phantom{\ibullet}a \ibullet}
    \bigloset{\ibullet a\phantom{\ibullet} \\ \ibullet a \ibullet}
    \bigloset{\phantom{\ibullet} b \ibullet \\ \ibullet a \ibullet}
    \bigloset{ \ibullet b \\ \ibullet a}:
  \end{equation*}
  it first starts the first $a$, then terminates $c$,
  then starts the second $a$, terminates $a$,
  then starts $b$ and finally terminates both $b$ and the second $a$.
  The dense step decompositions of $P$ are
  \begin{align*}
    P &= \bigloset{\phantom{\ibullet} a \ibullet \\ \ibullet c \ibullet} 
    \bigloset{\ibullet a \ibullet \\ \ibullet c\phantom{\ibullet}}
    \bigloset{\ibullet a \ibullet \\ \phantom{\ibullet}a \ibullet}
    \bigloset{\ibullet a\phantom{\ibullet} \\ \ibullet a \ibullet}
    \bigloset{\phantom{\ibullet} b \ibullet \\ \ibullet a \ibullet}
    \bigloset{ \ibullet b\phantom{\ibullet} \\ \ibullet a \ibullet}
    \bigloset{\ibullet d} \\
    &= \bigloset{\phantom{\ibullet} a \ibullet \\ \ibullet c \ibullet} 
    \bigloset{\ibullet a \ibullet \\ \ibullet c\phantom{\ibullet}}
    \bigloset{\ibullet a \ibullet \\ \phantom{\ibullet}a \ibullet}
    \bigloset{\ibullet a\phantom{\ibullet} \\ \ibullet a \ibullet}
    \bigloset{\phantom{\ibullet} b \ibullet \\ \ibullet a \ibullet}
    \bigloset{ \ibullet b \ibullet \\ \ibullet a\phantom{\ibullet}}
    \bigloset{\ibullet b},
  \end{align*}
  which differ only in the order in which $b$ and $a$ are terminated at the end.
\end{example}

Using Lemmas \ref{le:stepdecomp} and \ref{le:sparseuniq} we may now define a functor $\Phi: \iPomss \to \Cohs$ which will serve as inverse to $\Psi$: for $P\in \iPoms$ let $w\in \Coh$ be its unique sparse step decomposition and put $\Phi(P)=[w]_\sim$.

\begin{theorem}
  \label{th:genipoms}
  $\Phi$ is a functor, $\Psi \circ \Phi = \Id_{\iPomss}$, and $\Phi \circ \Psi = \Id_{\Cohs}\!$.
  Hence $\Phi: \iPomss\leftrightarrows \Cohs: \Psi$ is an isomorphism of categories.
\end{theorem}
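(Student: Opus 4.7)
The plan is to verify that $\Phi$ is well-defined, functorial, and a two-sided inverse to $\Psi$. Bijection on objects is immediate, so all the action is on hom-sets.

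I first spell out the candidate. For $P \in \iPoms$, Lemma~\ref{le:stepdecomp} yields some step sequence $[w]_\sim \in \Cohs$ with $\Psi(w) = P$, and Lemma~\ref{le:sparseuniq} supplies a unique sparse representative $w_s$ of that class; I set $\Phi(P) = [w_s]_\sim$. For $\Phi$ to be well-defined, the sparse step decomposition of $P$ must be uniquely determined by $P$ itself, i.e., two sparse coherent words with the same $\Psi$-image must coincide. Granting this, functoriality is routine. Identities map to identities because $\id_U = \ilo{U}{U}{U}$ as an ipomset is a one-letter coherent word whose $\sim$-class is the empty word $\id_U$ of $\Cohs(U,U)$ via the second clause of~\eqref{eq:sim-stepseq}. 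For composition, the concatenation $w_P w_Q$ of the sparse decompositions of gluable $P$ and $Q$ is a step decomposition of $P * Q$ whose $\sim$-class equals both $\Phi(P) \circ \Phi(Q)$ and, after passing to its own sparse form, $\Phi(P*Q)$.

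Both inverse relations then fall out. On one hand $\Psi(\Phi(P)) = \Psi(w_s) = P$ is immediate from the definitions. On the other, given $[w]_\sim \in \Cohs$ with sparse representative $w'$ (Lemma~\ref{le:sparseuniq}), the word $w'$ is a sparse step decomposition of $\Psi(w)$; by the uniqueness claim above it must coincide with the sparse representative defining $\Phi(\Psi(w))$, giving $[w]_\sim = [w']_\sim = \Phi(\Psi(w))$.

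The main obstacle is the uniqueness claim, where the interval hypothesis is essential and the technical input from~\cite{DBLP:conf/apn/FahrenbergZ23} enters. The direct approach is to construct a sparse decomposition canonically from $P$ itself: pick any interval representation $b, e : P \to \Real$, list the critical times $\{b(x)\}\cup\{e(x)\}$ in increasing order with ties broken so that terminations precede coincident starts, and read off the starter or terminator at each time (merging pairwise-concurrent simultaneous starts, resp.\ terminations, into a single element of $\St_+$, resp.\ $\Te_+$). What remains is to check that this sequence depends only on the isomorphism class of $P$ and not on the chosen representation, which follows from the standard antichain characterization of interval orders underlying the Janicki--Koutny encoding.
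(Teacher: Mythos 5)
Your argument is correct and matches the paper's: the proof there is a one-line observation that $\Phi(P*Q)=\Phi(P)\Phi(Q)$ holds by the definition of $\sim$, with everything else reducing, exactly as you lay out, to the existence and uniqueness of sparse step decompositions supplied by Lemmas~\ref{le:stepdecomp} and~\ref{le:sparseuniq}. Your closing paragraph attempts to re-derive that uniqueness from interval representations, which is not needed here (the paper imports it from \cite{DBLP:conf/apn/FahrenbergZ23}) and as sketched is not yet canonical --- the raw read-off of critical times depends on the chosen representation (e.g.\ $a=[0,2]$, $b=[1,3]$ versus $a=[0,3]$, $b=[1,2]$ yield different dense words, differing in the order of the two terminations) and only becomes representation-independent after merging consecutive starters and consecutive terminators into the sparse form, which is precisely the ``tedious'' part deferred to the reference --- but since you may simply invoke the lemma, this does not affect the correctness of your proof of the theorem.
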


\begin{proof}
  We have $\Phi(P*Q)=\Phi(P) \Phi(Q)$ by definition of $\sim$, and the other claims follow. \qed
\end{proof}

\begin{corollary}
  The category $\iPomss$ is generated by the directed multigraph $\bOmegas$ using gluing composition
  under the identities \eqref{eq:sim-stepseq}.
\end{corollary}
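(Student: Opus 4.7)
The plan is to recognize that this corollary is essentially a linguistic reformulation of Theorem~\ref{th:genipoms} in the language of categorical presentations. By construction, $\Coh$ is the free category on the directed multigraph $\bOmegas$ (so conclists are objects and non-identity morphisms are coherent words of starters and terminators), which gives it exactly the universal property of the ``free'' part of a presentation. The congruence $\sim$ imposed on $\Coh$ is by definition the smallest congruence containing the relations \eqref{eq:sim-stepseq}, so $\Cohs = \Coh/{\sim}$ is, tautologically, the category presented by $\bOmegas$ modulo those relations.

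With this unpacking in hand, the proof is essentially one line: Theorem~\ref{th:genipoms} produces an isomorphism of categories $\Psi: \Cohs \to \iPomss$, and transport of the presentation along this isomorphism shows that $\iPomss$ is itself presented by $\bOmegas$ modulo \eqref{eq:sim-stepseq}.

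The only point where one might worry is the claim that no further relations beyond those listed in \eqref{eq:sim-stepseq} are needed to recover $\iPomss$ from $\bOmegas$. This is precisely the nontrivial direction $\Phi \circ \Psi = \Id_{\Cohs}$ of the theorem: it says that two coherent words representing the same ipomset under gluing are already equated by~$\sim$. So rather than re-prove anything, I would simply write a two-sentence proof explicitly pointing out that $\Cohs$ is by construction the presented category and that $\Psi$ from Theorem~\ref{th:genipoms} supplies the required isomorphism with $\iPomss$.
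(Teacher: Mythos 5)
Your proposal is correct and matches the paper's (implicit) reasoning exactly: the corollary is stated without a separate proof precisely because $\Cohs$ is by construction the category presented by $\bOmegas$ modulo \eqref{eq:sim-stepseq}, and the isomorphism $\Psi$ from Theorem~\ref{th:genipoms} transports that presentation to $\iPomss$. You also correctly identify that the only substantive content is the direction $\Phi\circ\Psi=\Id_{\Cohs}$, which is carried by the theorem, not the corollary.
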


\section{Subsumptions in Step Sequences}
\label{se:step-subsumptions}

A \emph{subsumption} of ipomsets $P$ and $Q$
is a bijection $f: P\to Q$ for which
\begin{enumerate}
\item $f(S_P)=S_Q$; $f(T_P)=T_Q$; $\lambda_Q\circ f=\lambda_P$;
\item $f(x)<_Q f(y)$ implies $x<_P y$; and
\item $x\not<_P y$, $y\not<_P x$, and $x\evord_P y$ imply $f(x)\evord_Q f(y)$.
\end{enumerate}
We write $P\subseq Q$ if there is a subsumption $f: P\to Q$ and $P\subs Q$ if $P\subseq Q$ and $P\not\simeq Q$.
Thus, subsumptions preserve interfaces and labels but may remove precedence order and add essential event order.
Isomorphisms of ipomsets are precisely invertible subsumptions.

In this section,
we extend the equivalence between ipomsets and step sequences from Thm.~\ref{th:genipoms}
to also cover subsumptions.

\begin{definition}
  A subsumption $f: P\to Q$ is \emph{elementary} if there is a unique pair $x, y\in P$
  such that $x<_P y$ and $f(x)\not<_Q f(y)$.
\end{definition}

We show that subsumptions are generated by elementary ones.
To this end, let $P\subs Q$ with witness $f: P\to Q$.
Define $O = {<_P} \setminus {<_Q}$ (we tacitly identify $<_P$ with its image in $Q$ by $f$),
and let ${\preceq}\subseteq O\times O$ be the relation given by $(x, y) \preceq (x', y')$ if $x \le_P x'$ and $y \ge_P y'$.

\begin{lemma}
  \label{le:orderonorder}
  $\preceq$ is a partial order on $O$ and has a maximal element $(x_0, y_0)$.
  Let $R$ be defined as $P$,
  but with ${<_R}={<_P}\setminus\{(x_0, y_0)\}$
  and ${\evord_R}={\evord_Q}$,
  then $\id: P\to R$ is an elementary subsumption and $f: R\to Q$ a subsumption.
\end{lemma}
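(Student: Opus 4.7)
My plan is to verify the three claims in order: $\preceq$ is a partial order, $O$ admits a $\preceq$-maximal element, and the resulting $R$ yields both an elementary subsumption $\id\colon P\to R$ and a subsumption $f\colon R\to Q$.

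The first claim is routine: reflexivity and transitivity of $\preceq$ descend from $\le_P$, and antisymmetry uses that $<_P$ is strict, so $x\le_P x'\le_P x$ forces $x=x'$ (and likewise for $y$). For the existence of a maximum, I first argue $O$ is nonempty: otherwise $<_P$ and $<_Q$ agree through $f$, hence a pair is concurrent in $P$ iff it is concurrent in $Q$, and condition (3) of subsumption combined with the totality of $\evord$ on concurrent pairs and the asymmetry of $\evord_Q$ forces $\evord_P$ and $\evord_Q$ to match (a reversed essential event-order witness would collapse under asymmetry of $\evord_Q$). Thus $f$ would be an isomorphism, contradicting $P\subs Q$. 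Since $O$ is finite, a $\preceq$-maximal element $(x_0,y_0)$ exists.

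The technical heart of the proof is the transitivity of $<_R$, where maximality of $(x_0,y_0)$ is essential. Suppose $x<_R y<_R z$; since $<_R\subseteq <_P$ and $<_P$ is transitive, $x<_P z$, and it remains to show $(x,z)\ne(x_0,y_0)$. If instead $x=x_0$ and $z=y_0$, then not both $f(x_0)<_Q f(y)$ and $f(y)<_Q f(y_0)$ can hold, since transitivity of $<_Q$ would then place $(x_0,y_0)$ in $<_Q$, contradicting $(x_0,y_0)\in O$; hence at least one of $(x_0,y)$ or $(y,y_0)$ lies in $O$, and using $x_0<_P y<_P y_0$ either is strictly $\preceq$-above $(x_0,y_0)$, contradicting maximality. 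The remaining ipomset axioms for $R$ are then: totality of $<_R\cup\evord_R$, checked pair-by-pair --- for the removed pair $(x_0,y_0)$, the $f$-images are $<_Q$-incomparable by $(x_0,y_0)\in O$ and asymmetry, hence $\evord_Q$-ordered; for other $<_P$-incomparable pairs, the contrapositive of subsumption condition (2) passes incomparability through $f$, so $\evord_Q$ again supplies the ordering --- and the interface minimality/maximality conditions, which are inherited from $P$ since $<_R\subseteq <_P$.

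The two subsumption claims are then bookkeeping. For $\id\colon P\to R$: condition (2) is $<_R\subseteq <_P$; condition (3) transports condition (3) of $f\colon P\to Q$ through the definition $\evord_R=\evord_Q$; and the unique removed pair $(x_0,y_0)$ witnesses elementarity. For $f\colon R\to Q$: whenever $f(x)<_Q f(y)$ we have $(x,y)\notin O$, hence $(x,y)\ne(x_0,y_0)$, so the pair is preserved in $<_R$; condition (3) is automatic from $\evord_R=\evord_Q$. The only genuinely nonroutine step is the case split inside the transitivity argument, choosing between $(x_0,y)$ and $(y,y_0)$ and using maximality in each case; everything else is a careful application of the subsumption axioms and the definitions.
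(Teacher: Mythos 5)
Your proof is correct and follows essentially the same route as the paper's: derive the poset structure of $\preceq$ from $\le_P$, use maximality of $(x_0,y_0)$ to establish transitivity of $<_R$, verify totality of ${<_R}\cup{\evord_R}$ via $(x_0,y_0)\in O$, and check the subsumption axioms directly. You in fact supply two details the paper leaves implicit: the nonemptiness of $O$ (needed for a maximal element to exist at all), and the case split showing that an intermediate $y$ with $x_0<_P y<_P y_0$ forces one of $(x_0,y)$ or $(y,y_0)$ into $O$ strictly $\preceq$-above $(x_0,y_0)$.
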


\begin{proof}
  Reflexivity, anti-symmetry and transitivity of $\preceq$ come from $\le_P$.
  As $O$ is finite, there exists a maximal element.

  For the second part, we start by proving that $R$ is an ipomset. 
  $S$, $T$ and $\lambda$ all come from $P$,
  and ${<_R}\cup {\evord_R}$ is total because $(x_0, y_0)\in O$.
  All that is left is to prove is that $<_{R}$ is a strict partial order.

  Irreflexivity and anti-symmetry are given by $<_P$.
  For transitivity, let $x <_{R} y$ and $y <_{R} z$. Then $x<_P z$ and we have to show that either $x \neq x_0$ or $z \neq y_0$. 
  Assuming $(x, z)=(x_0, y_0)$, there would be $y$ such that $x_0 <_{P} y <_{P} y_0$
  which contradicts the maximality of $(x_0, y_0)$ in the order $\preceq$.

  Finally, $\id: P\to R$ and $f: R\to Q$ clearly are subsumptions, and the first is elementary because $|{<_P}\setminus {<_{R}}| = 1$.
  \qed
\end{proof}

Let us now introduce some useful notation for starters and terminators to
more clearly specify the conclists of events which are started or terminated.
For $U\in \square$ and $A, B\subseteq U$ we write
$\starter{U}{A}=\ilo{U\setminus A}{U}{U}$ for the starter which starts the events in $A$ and
$\terminator{U}{B}=\ilo{U}{U}{U\setminus B}$ for the terminator which terminates the events in $B$.
In the following definition 
we express the result of transpositions of elements of a coherent word $w = P_1 \dots P_n \in \Coh$.

\begin{definition}
  \label{de:transpo}
  Let $U \in \square$ with $A,B \subseteq U$ and $A\cap B = \emptyset$.
  Let $w = P_1 \dots P_n \in \Coh$ and $i\in\{1,\dots, n-1\}$.
  The \emph{$i$-th transposition} on $w$, that we denote $\tau_i(w)$,
  is equal to $P_1 \dots P_{i-1} \, P'_i P'_{i+1} \, P_{i+2} \dots P_n $, with $P'_i P'_{i+1} = $
  \begin{alignat*}{2}
    &\starter {(U\setminus A)} B \starter {U} A
    &&\quad\text{if}\quad P_i = \starter {(U \setminus B)} A \text{ and } P_{i+1} = \starter {U} B, \\
    & \terminator {U} B \terminator {(U \setminus B)} A 
    &&\quad\text{if}\quad P_i = \terminator U A \text{ and } P_{i+1} = \terminator {(U \setminus A)} B, \\
    & \terminator {(U \setminus A)} B \starter {(U \setminus B)} A \
    &&\quad\text{if}\quad P_i = \starter U A \text{ and } P_{i+1} = \terminator U B, \\
    &\starter U B \terminator U A 
    &&\quad\text{if}\quad P_i = \terminator {(U \setminus B)} A\text{ and } P_{i+1} = \starter {(U \setminus A)} B.
  \end{alignat*}
\end{definition}

\begin{figure}[tbp]
\centering
\begin{tikzpicture}[x=.95cm, scale=1, every node/.style={transform shape}]
    \def\possh{-1.3}

    \begin{scope}[shift={(0,0)}]
      \def\hw{0.3}
      \filldraw[fill=green!50!white,-](0,1.2)--(1.8,1.2)--(1.8,1.2+\hw)--(0,1.2+\hw) -- (0, 1.2);
      \filldraw[fill=red!50!white,-](-0.6,0.7)--(.6,0.7)--(.6,0.7+\hw)--(-0.6,0.7+\hw)--(-0.6,0.7);
      \filldraw[fill=blue!20!white,-](0,0.2)--(1.2,0.2)--(1.2,0.2+\hw)--(0,0.2+\hw)--(0,0.2);
      \node at (0.8,1.2+\hw*0.5) {$a$};
      \node at (-0.2,0.7+\hw*0.5) {$b$};
      \node at (0.4,0.2+\hw*0.5) {$c$};
      \path (-0.6, 1.2+\hw) edge[-,dashed] (-0.6,0.2);
      \path (0, 1.2+\hw) edge[-,dashed] (0,0.2);
      \path (0.6, 1.2+\hw) edge[-,dashed] (0.6,0.2);
      \path (1.2, 1.2+\hw) edge[-,dashed] (1.2,0.2);
      \path (1.8, 1.2+\hw) edge[-,dashed] (1.8,0.2);
      \node (1) at (-0.6, -0.1) {$1$};
      \node (2) at (-0, -0.1) {$2$};
      \node (3) at (0.6, -0.1) {$3$};
      \node (4) at (1.2, -0.1) {$4$};
      \node (5) at (1.8, -0.1) {$5$};
      \node at (.6, -.6) {$w$};
    \end{scope}
    
    \begin{scope}[shift={(4,0)}]
      \def\hw{0.3}
      \filldraw[fill=green!50!white,-](0,1.2)--(1.8,1.2)--(1.8,1.2+\hw)--(0,1.2+\hw) -- (0, 1.2);
      \filldraw[fill=red!50!white,-](-0.6,0.7)--(1.2,0.7)--(1.2,0.7+\hw)--(-0.6,0.7+\hw)--(-0.6,0.7);
      \filldraw[fill=blue!20!white,-](0,0.2)--(0.6,0.2)--(0.6,0.2+\hw)--(0,0.2+\hw)--(0,0.2);
      \node at (0.8,1.2+\hw*0.5) {$a$};
      \node at (0.2,0.7+\hw*0.5) {$b$};
      \node at (0.3,0.2+\hw*0.5) {$c$};
      \path (-0.6, 1.2+\hw) edge[-,dashed] (-0.6,0.2);
      \path (0, 1.2+\hw) edge[-,dashed] (0,0.2);
      \path (0.6, 1.2+\hw) edge[-,dashed] (0.6,0.2);
      \path (1.2, 1.2+\hw) edge[-,dashed] (1.2,0.2);
      \path (1.8, 1.2+\hw) edge[-,dashed] (1.8,0.2);
      \node (1) at (-0.6, -0.1) {$1$};
      \node (2) at (-0, -0.1) {$2$};
      \node (3) at (0.6, -0.1) {$3$};
      \node (4) at (1.2, -0.1) {$4$};
      \node (5) at (1.8, -0.1) {$5$};
      \node at (.6, -.6) {$\tau_3(w)$};
    \end{scope}
    
    \begin{scope}[shift={(8,0)}]
      \def\hw{0.3}
      \filldraw[fill=green!50!white,-](0.6,1.2)--(1.8,1.2)--(1.8,1.2+\hw)--(0.6,1.2+\hw) -- (0.6, 1.2);
      \filldraw[fill=red!50!white,-](-0.6,0.7)--(0,0.7)--(.0,0.7+\hw)--(-0.6,0.7+\hw)--(-0.6,0.7);
      \filldraw[fill=blue!20!white,-](0.6,0.2)--(1.2,0.2)--(1.2,0.2+\hw)--(0.6,0.2+\hw)--(0.6,0.2);
      \node at (1.0,1.2+\hw*0.5) {$a$};
      \node at (-0.3,0.7+\hw*0.5) {$b$};
      \node at (.9,0.2+\hw*0.5) {$c$};
      \path (-0.6, 1.2+\hw) edge[-,dashed] (-0.6,0.2);
      \path (0, 1.2+\hw) edge[-,dashed] (0,0.2);
      \path (0.6, 1.2+\hw) edge[-,dashed] (0.6,0.2);
      \path (1.2, 1.2+\hw) edge[-,dashed] (1.2,0.2);
      \path (1.8, 1.2+\hw) edge[-,dashed] (1.8,0.2);
      \node (1) at (-0.6, -0.1) {$1$};
      \node (2) at (-0, -0.1) {$2$};
      \node (3) at (0.6, -0.1) {$3$};
      \node (4) at (1.2, -0.1) {$4$};
      \node (5) at (1.8, -0.1) {$5$};
      \node at (.6, -.6) {$\tau_2(w)$};
    \end{scope}
\end{tikzpicture}
\caption{Interval representations of several ipomsets, \cf Ex.~\ref{ex:swap}.}
\label{fi:intervalrep}
\end{figure}
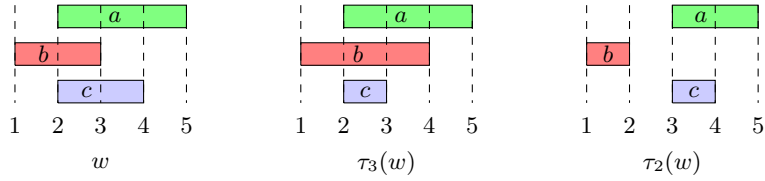

That is, $\tau_i(w)$ swaps the $i$th and $(i+1)$th element of $w$,
but takes care of adjusting them to preserve coherency between the start and ending of events.

\begin{example}
  \label{ex:swap}
  Figure~\ref{fi:intervalrep} presents several interval representations of ipomsets.
  Intuitively, the left one is a depiction of
  \begin{equation*}
    w = \bigloset{b \ibullet}
    \bigloset{\phantom{\ibullet}a \ibullet \\ \ibullet b \ibullet \\ \phantom{\ibullet}c \ibullet}
    \bigloset{\ibullet a  \ibullet \\ \ibullet b\phantom{\ibullet} \\ \ibullet c \ibullet}
    \bigloset{\ibullet a \ibullet \\ \ibullet c\phantom{\ibullet}}
    \bigloset{\ibullet a}.
  \end{equation*}
  By swapping the third and fourth elements
  (corresponding here to the second case of the definition),
  we obtain
  \begin{equation*}
    \tau_3(w) = \bigloset{b \ibullet}
    \bigloset{\phantom{\ibullet}a \ibullet \\ \ibullet b \ibullet \\ \phantom{\ibullet}c \ibullet}
    \bigloset{\ibullet a  \ibullet \\ \ibullet b \ibullet \\ \ibullet c\phantom{\ibullet}}
    \bigloset{\ibullet a \ibullet \\ \ibullet b\phantom{\ibullet}}
    \bigloset{\ibullet a},
  \end{equation*}
  represented in the middle of the figure.
  If we instead apply the transformation $\tau_2$ (third case of the definition),
  we obtain
  \begin{equation*}
    \tau_2(w) = \bigloset{b \ibullet}
    \bigloset{\ibullet b}
    \bigloset{a \ibullet \\ c \ibullet}
    \bigloset{\ibullet a \ibullet \\ \ibullet c\phantom{\ibullet}}
    \bigloset{\ibullet a}
  \end{equation*}
  shown in the right part of the figure.
  Note that $\Psi(w)=\Psi(\tau_3(w))$ but $\Psi(w)\subs \Psi(\tau_2(w))$.
\end{example}

\begin{remark}
  In the context of HDAs,
  \cite{DBLP:journals/tcs/Glabbeek06}
  defines a notion of \emph{adjacency} for paths
  which consists of precisely the analogues of the transformations that we define above.
  Adjacency is then used to define \emph{homotopy} of paths,
  whereas we use it to define subsumptions.
  On paths, homotopy is the symmetric and transitive closure of subsumption.
\end{remark}

Our goal now is to prove that the transpositions defined above
are exactly what is needed to express the notion of subsumption on step sequences.
This will be done in Thm.~\ref{th:subsumption}, but first we need some technical lemmas.

First we treat the first two cases in Def.~\ref{de:transpo},
swapping starters or swapping terminators:

\begin{lemma}
  \label{lem:transim}
  Let $w = P_1 \dots P_n \in \Coh$.
  If $P_i, P_{i+1}\in \St$ or $P_i, P_{i+1}\in \Te$,
  then $\tau_i(w) \sim w$.
\end{lemma}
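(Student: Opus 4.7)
The plan is to exploit the first defining relation of $\sim$ from \eqref{eq:sim-stepseq}, which composes any two subsequent starters (respectively terminators) into their gluing. Since $\sim$ is a congruence on $\Coh$, the statement reduces to showing that $P_i P_{i+1} \sim P'_i P'_{i+1}$ in the two homogeneous cases of Def.~\ref{de:transpo}; applying that relation on both sides further reduces the task to the single identity $P_i * P_{i+1} = P'_i * P'_{i+1}$.

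First I would handle the starter case, so $P_i = \starter{(U\setminus B)}{A}$, $P_{i+1} = \starter{U}{B}$, $P'_i = \starter{(U\setminus A)}{B}$, $P'_{i+1} = \starter{U}{A}$. Unfolding the definitions, $P_i=\ilo{U\setminus(A\cup B)}{U\setminus B}{U\setminus B}$ and $P_{i+1}=\ilo{U\setminus B}{U}{U}$, and Lem.~\ref{le:isoeqglu} lets me realize their gluing on the set $U$. Since $P_i\setminus T_{P_i}=\emptyset$, clause~2 of Def.~\ref{de:gluing} adds no new precedence, so $P_i * P_{i+1}$ is the discrete ipomset on $U$ with source $U\setminus(A\cup B)$ and target $U$, that is, $\starter{U}{A\cup B}$. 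The same calculation applied to $P'_i P'_{i+1}$, using $A\cap B=\emptyset$ so that $(U\setminus A)\setminus B = U\setminus(A\cup B)$, gives $P'_i * P'_{i+1} = \starter{U}{A\cup B}$ as well. The terminator case is entirely symmetric and produces $\terminator{U}{A\cup B}$ on both sides.

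With the two gluings equal, transitivity and symmetry of $\sim$ yield $P_i P_{i+1} \sim P'_i P'_{i+1}$, and because $\sim$ is a congruence the transposition performed inside $w$ gives $w\sim\tau_i(w)$. I do not anticipate any genuine obstacle: the only substantive step is the unfolding of the gluing, and this is routine once one observes that for two composable starters (or terminators) no new precedence is introduced, forcing the composition to be the single starter (terminator) on $U$ that starts (terminates) $A\cup B$.
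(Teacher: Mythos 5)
Your proposal is correct and follows essentially the same route as the paper's proof: both reduce, via the congruence relation \eqref{eq:sim-stepseq}, to the single identity $P_i * P_{i+1} = \starter{U}{A\cup B} = P'_i * P'_{i+1}$ for consecutive starters, with the terminator case symmetric. You merely spell out the gluing computation (no new precedence since $P_i\setminus T_{P_i}=\emptyset$) that the paper leaves implicit.
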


\begin{proof}
  Let $P_i = \starter {U'} A$ and $P_{i+1} = \starter {U} B$.
  As $w$ is coherent, $U = U' \cup B$ and $A \cap B = \emptyset$.
  Then the lemma follows from $P_i * P_{i+1} = \starter {U} {A \cup B} = \starter {(U'\setminus A)} B * \starter {U} A $.
  The arguments are similar in the case of successive terminators. \qed
\end{proof}

\begin{lemma}
  \label{lem:transpositionsSorT}
  Let $w_1, w_2\in \Coh\cap \St^*$, resp.\ $w_1, w_2\in \Coh\cap \Te^*$, be dense.
  Then $w_1 \sim w_2$ iff there exist
  dense words $u_1,\dots, u_m\in \Coh\cap \St^*$, resp.\ $u_1,\dots, u_m\in \Coh\cap \Te^*$,
  for some $m \ge 0$,
  and indices $i_1,\dots, i_{m-1}\in\{1,\dots, n-1\}$
  such that $u_1 = w_1$, $u_m = w_2$ and,
  for all $k$, $u_{k+1} = \tau_{i_k} (u_k)$.
\end{lemma}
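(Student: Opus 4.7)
The plan is to prove the two implications separately. The $(\Leftarrow)$ direction is immediate from Lemma~\ref{lem:transim} together with transitivity of $\sim$: each step $u_k\sim\tau_{i_k}(u_k)=u_{k+1}$, so $w_1=u_1\sim u_m=w_2$. The substance lies in the $(\Rightarrow)$ direction, which I plan to reduce to the fact that the symmetric group is generated by adjacent transpositions.

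Assume $w_1,w_2\in\Coh\cap\St^*$ are dense with $w_1\sim w_2$. By the lemma preceding Theorem~\ref{th:genipoms}, $\Psip(w_1)=\Psip(w_2)$, and since gluings of starters are starters, this common image is a single starter $P=\starter{U}{A}$ whose set of started events is $A$, say with $|A|=n$. Each $w_j$ is then a dense $\St^*$-decomposition of $P$, i.e., a sequence of $n$ elementary starters that glues to $P$.

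The key step is to identify dense $\St^*$-decompositions of $P$ with linear orderings of $A$. Given an ordering $(a_1,\dots,a_n)$ of $A$, set $U_j=(U\setminus A)\cup\{a_1,\dots,a_j\}$ (as a conclist with event order inherited from $U$); the associated decomposition is
\begin{equation*}
\starter{U_1}{\{a_1\}}\cdot\starter{U_2}{\{a_2\}}\cdots\starter{U_n}{\{a_n\}}.
\end{equation*}
Conversely, coherency forces any dense $\St^*$-decomposition of $P$ to grow the conclist by exactly one element at each step, hence to determine a unique ordering of $A$. A direct unpacking of the first case of Def.~\ref{de:transpo} with $|A|=|B|=1$ shows that $\tau_i$ acts on this decomposition by swapping the $i$-th and $(i+1)$-th entries of the ordering, i.e., as an adjacent transposition in the symmetric group on $A$.

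Since adjacent transpositions generate the symmetric group (e.g., by bubble sort), any ordering of $A$ can be transformed into any other by a finite sequence of them, and the corresponding sequence of $\tau_i$'s connects $w_1$ to $w_2$ through dense $\Coh\cap\St^*$-words (each intermediate word remains dense and in $\St^*$ because $\tau_i$ on two elementary starters yields two elementary starters). The terminator case is entirely symmetric, using the second case of Def.~\ref{de:transpo}. I expect the main, though still routine, obstacle to be the careful verification of the bijection between dense decompositions and orderings of $A$, and that $\tau_i$ corresponds under it to the expected adjacent transposition; the rest is an application of the standard generation result for $S_n$.
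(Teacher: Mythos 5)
Your proposal is correct and follows essentially the same route as the paper: the right-to-left direction via Lemma~\ref{lem:transim}, and the converse by viewing a dense $\St^*$-word as an ordering of the started events and sorting one ordering into the other by adjacent transpositions (the paper phrases this as successively moving the event started in $Q_i$ to position $i$, which is the same bubble-sort argument). Your version is in fact spelled out more carefully than the paper's rather terse sketch, in particular the explicit bijection between dense decompositions of $\starter{U}{A}$ and linear orderings of $A$.
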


\begin{proof}
  The direction from right to left follows from Lem.~\ref{lem:transim}.
  For the converse, write $w_1 = P_1\dots P_n$ and $w_2 = Q_1\dots Q_n$
  and assume that $w_1,w_2 \in \St^*$.
  Let, for all $i \le n$, $x_i$ be the event started in $Q_i$ and in the $i_j$th starter of $u_{k}$.
  Set $u_{k+1} =  \tau_i \circ \dots \circ \tau_{i_j-1}(u_{k})$.
  Then $u_m = w_2$.
  The arguments are similar for $w_1,w_2 \in \Te^*$. \qed
\end{proof}

\begin{lemma}	
  \label{le:transpositionEquiv}
  Let $w_1 = P_1\dots P_n$, $w_2 = Q_1 \dots Q_n\in \Coh$ be dense.
  Then $w_1 \sim w_2$ iff there exist
  dense words $u_1,\dots, u_m\in \Coh$,
  for some $m \ge 0$,
  and indices $i_1,\dots, i_{m-1}\in\{1,\dots, n-1\}$
  such that $u_1 = w_1$, $u_m = w_2$ and,
  for all $k$, $u_{k+1} = \tau_{i_k} (u_k)$.
\end{lemma}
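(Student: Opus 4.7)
The plan is to prove the two directions separately, pivoting on the unique sparse representative for the nontrivial direction.

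The right-to-left implication follows directly from Lem.~\ref{lem:transim}: assuming each transposition $\tau_{i_k}$ in the chain is of the first or second type in Def.~\ref{de:transpo} (swapping two adjacent starters or two adjacent terminators — the only cases preserving $\sim$), we have $u_k \sim u_{k+1}$ at each step, whence $w_1 \sim w_2$ by transitivity. Mixed transpositions do not preserve $\sim$ in general, so the chain implicitly uses only the $\sim$-preserving swaps.

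For the left-to-right direction, I would pivot on the unique sparse representative $s = R_1 \cdots R_k$ of the common step sequence $[w_1]_\sim = [w_2]_\sim$ (Lem.~\ref{le:sparseuniq}), where consecutive $R_j$ alternate between starters and terminators. The key claim is that every dense representative $w$ of this class decomposes uniquely as $w = w^1 \cdots w^k$, where each $w^j$ is a dense coherent word consisting entirely of elementary starters (if $R_j \in \St$) or entirely of elementary terminators (if $R_j \in \Te$), whose gluing is $R_j$. The reason is that adjacent starter--terminator or terminator--starter pairs cannot be combined under $\sim$, so the block boundaries — the positions in $w$ where the type switches — are forced by the type-alternation pattern of $s$ and are invariant across the $\sim$-class.

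Applying this decomposition to both $w_1$ and $w_2$ yields $w_1 = w_1^1 \cdots w_1^k$ and $w_2 = w_2^1 \cdots w_2^k$ with $w_1^j \sim w_2^j$ (both glue to $R_j$). For each $j$, Lem.~\ref{lem:transpositionsSorT} then provides a chain of transpositions — automatically of type 1 or 2, since $w_1^j, w_2^j$ lie in $\St^*$ or in $\Te^*$ — transforming $w_1^j$ into $w_2^j$. Each such intra-block transposition lifts to a transposition in the full word at the corresponding offset, leaving the other blocks unchanged. Concatenating these local chains for $j = 1, \ldots, k$ produces the desired chain from $w_1$ to $w_2$.

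The main technical point — and the step I expect to take the most care — is establishing the uniqueness of the block decomposition: that the partition of a dense representative into maximal consecutive same-type runs coincides with the block structure of $s$. Once this invariant is in place, the remainder of the argument reduces directly to Lem.~\ref{lem:transpositionsSorT} applied block by block.
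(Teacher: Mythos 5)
Your proof follows the paper's argument essentially verbatim: the right-to-left direction via Lemma~\ref{lem:transim}, and the converse by factoring both dense words into blocks matching the unique sparse representative and applying Lemma~\ref{lem:transpositionsSorT} block by block. Your extra care about why the block boundaries are forced (maximal same-type runs glue to the sparse word, which is unique) makes explicit a point the paper leaves implicit, but the route is the same.
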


\begin{proof}
  The direction from right to left follows again from Lem.~\ref{lem:transim}.
  For the converse,  let $w = R_1\dots R_l$ be a sparse coherent word such that $w \sim w_1 \sim w_2$.
  Note that $l \leq n$.
  For all $i \leq l$,  there exist factors $w_{1,i} = P_{i,1}\dots P_{i,j_i}$ and $w_{2,i} = Q_{i,1}\dots Q_{i,j'_i}$ of respectively $w_1$ and $w_2$ such that $w_{1,i} \sim w_{2,i} \sim R_i$.
  In addition, if $R_i$ is a starter (resp.\ terminator) then $w_{1,i}$ and $w_{2,i}$ are sequences of starters (resp.\ terminators).
  The lemma now follows using Lem.~\ref{lem:transpositionsSorT}. \qed
\end{proof}

Now we treat the last two cases in Def.~\ref{de:transpo}
which swap starters with terminators or terminators with starters:

\begin{lemma}
  \label{le:wtos}
  Let $w = P_1 \dots P_n \in \Coh$.
  If $P_i P_{i+1} = \starter{U} A \terminator {U'} B$ with $A \cap B = \emptyset$, then $\Psi(\tau_i(w)) \subs \Psi(w)$.
  If $P_i P_{i+1} = \terminator {U} B \starter{U'} A$, then $\Psi(w) \subs \Psi(\tau_i(w))$.
\end{lemma}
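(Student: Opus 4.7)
The plan is to realise both $\Psi(w)$ and $\Psi(\tau_i(w))$ on a common event set and to exhibit the identity on events as the required subsumption. First I invoke Lemma~\ref{le:isoeqglu} to pick representatives of the two words in which consecutive letters overlap exactly on their interfaces. The two gluings then share a single event set $E$ with identical source, target, and labelling, and for any coherent word $v = Q_1 \dots Q_n$ of discrete ipomsets the precedence is given by $x <_{\Psi(v)} y$ iff $\max\{j : x\in Q_j\} < \min\{j : y\in Q_j\}$.

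Next I compute the life-intervals in both words. Events not involved in the swap have identical lives. In Case~1 (where $P_iP_{i+1} = \starter{U}{A}\,\terminator{U}{B}$), the $A$-events delay their start from $i$ to $i+1$ while the $B$-events advance their end from $i+1$ to $i$; in Case~2 the roles are reversed. Translated into the precedence formula, this yields $<_{\Psi(\tau_i(w))} = {<_{\Psi(w)}}\cup(B\times A)$ in Case~1 (the shifted endpoints now separate every $b\in B$ strictly from every $a\in A$), and $<_{\Psi(w)} = {<_{\Psi(\tau_i(w))}}\cup(B\times A)$ in Case~2. A short transitivity check confirms that no further precedences arise, because $A$- and $B$-events interact with events outside the local window only through life-positions that the transposition preserves.

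For the third ingredient I verify the event-order condition. Any pair concurrent in the side with more precedence is automatically concurrent in the side with less, and both events must co-occur in some common letter $P_j$ or $P'_j$. Since the $P_j$'s for $j\ne i,i+1$ are identical on both sides, and the conclists at positions $i$ and $i+1$ all inherit their event order by restriction from the fixed total order on $U$, the event orders coincide on every concurrent pair. The identity map on $E$ is therefore a subsumption in the claimed direction, and it is strict whenever $A$ and $B$ are both nonempty.

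The principal obstacle is the precedence comparison: translating the boundary shifts into a precise difference of strict orders requires careful bookkeeping, and in particular one must check that transitive closure with events in $P_1\dots P_{i-1}$ or $P_{i+2}\dots P_n$ produces no additional relations. This reduces to the observation that the relevant interactions of $A$- and $B$-events with the surroundings happen at life-positions outside $\{i,i+1\}$, which remain fixed under the transposition.
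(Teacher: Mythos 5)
Your argument is correct, and it is actually more self-contained than the paper's. The paper argues purely locally: it observes that $\terminator{(U\setminus A)}{B} * \starter{(U\setminus B)}{A} \subs \starter{U}{A} * \terminator{U}{B}$, since $B$ precedes $A$ in the former while the two sets are concurrent in the latter, and then asserts that the lemma follows --- implicitly using that gluing is monotone under subsumption of a single factor, a step it never spells out. You instead work globally: after fixing overlapping representatives via Lem.~\ref{le:isoeqglu}/Lem.~\ref{le:cohcoh}, you describe the precedence of the whole gluing by the position-interval formula $x<y$ iff $\max\{j: x\in Q_j\}<\min\{j: y\in Q_j\}$; this relation is already transitive, so no extra pairs can appear under transitive closure, and since the transposition moves only the first position of the $A$-events and the last position of the $B$-events (each by one step), the difference of the two orders is exactly $B\times A$ in the direction claimed. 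This buys a complete proof of the propagation from the swapped pair to the whole word, which is precisely what the paper leaves implicit. Your handling of the event-order clause (concurrent events must co-occur in a common letter, where the conclists of both words agree) is also the right argument. Two minor points worth making explicit: the global sources and targets are unaffected even when $i=1$ or $i+1=n$, because $\starter{U}{A}$ and $\terminator{(U\setminus A)}{B}$ share the source $U\setminus A$ (and dually for targets); and your observation that strictness of $\subs$ needs $A,B\neq\emptyset$ is a genuine refinement of the statement --- the lemma tacitly assumes the swapped letters are non-identities, which holds wherever it is applied (dense words of elementary starters and terminators).
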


\begin{proof}
  Since $w$ is coherent, we have $U = U'$. 
  Let $P'_iP'_{i+1} = \terminator {(U \setminus A)}  B \starter {(U \setminus B)} A$.
  Then the lemma follows from $P'_i*P'_{i+1} \subs P_i * P_{i+1}$.
  Indeed, in $P'_i * P'_{i+1}$ the elements of $B$ must precede those of $A$, while they are concurrent in $P_i * P_{i+1}$.
  As to the second claim,
  if $P_iP_{i+1} = \terminator {U} B \starter{U'} A$,
  then $U' \setminus A = U \setminus B$,
  and the events of $B$ precede those of $A$ in $P_i * P_{i+1}$, while they are concurrent in $\starter U B * \terminator U A$. \qed
\end{proof}

\begin{lemma}
  \label{le:existseq}
  Let $f: P\to Q$ be a subsumption and $P=P_1*\dots*P_n$, $Q=Q_1*\dots*Q_n$
  dense step decompositions.
  Then there exist dense words $u_1,\dots, u_m\in \Coh$
  for some $m \ge 0$,
  and indices $i_1,\dots, i_{m-1}\in\{1,\dots, n-1\}$
  such that $u_1 = w_1 = P_1\dots P_n$, $u_m = w_2=Q_1\dots Q_n$ and,
  for all $k$, $u_{k+1} = \tau_{i_k} (u_k)$.
\end{lemma}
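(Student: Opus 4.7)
The plan is induction on the cardinality of $O={<_P}\setminus{<_Q}$, identifying events of $P$ and $Q$ via~$f$.

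\emph{Base case} ($|O|=0$). Here conditions (2) and (3) of the subsumption definition degenerate into ``$f(x)<_Q f(y)$ iff $x<_P y$'' and ``on concurrent pairs, $\evord_P$ is preserved'', so $f$ is an ipomset isomorphism and $P\simeq Q$. By Thm.~\ref{th:genipoms} this means $[w_1]_\sim=[w_2]_\sim$, and Lem.~\ref{le:transpositionEquiv} supplies the desired sequence of transpositions (using only types~1 and~2).

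\emph{Inductive step} ($|O|>0$). Apply Lem.~\ref{le:orderonorder} to pick a $\preceq$-maximal $(x_0,y_0)\in O$ and to form the intermediate ipomset $R$ with ${<_R}={<_P}\setminus\{(x_0,y_0)\}$ and $\evord_R=\evord_Q$; then $\id:P\to R$ is an elementary subsumption and $f:R\to Q$ is a subsumption with $|{<_R}\setminus{<_Q}|=|O|-1$. It suffices to exhibit a dense decomposition $w_R$ of $R$ reachable from $w_1$ by transpositions: the induction hypothesis applied to $f:R\to Q$ with $w_R$ and $w_2$ then yields the tail sequence, which we concatenate to get $w_1\to w_2$.

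To produce $w_R$, the strategy is as follows. By Lem.~\ref{lem:transim}, type-1 and type-2 transpositions preserve $\Psi$ up to $\sim$; we first use them to rearrange $w_1$ into a dense decomposition $w'_1$ of $P$ in which the elementary terminator of $x_0$ is immediately followed by the elementary starter of $y_0$, say at positions $i,i{+}1$. A single application of the type-4 transposition $\tau_i$, whose effect is described by Lem.~\ref{le:wtos}, then removes exactly the precedence $(x_0,y_0)$ from $\Psi$, so that $w_R:=\tau_i(w'_1)$ is a dense decomposition of~$R$. The feasibility of the local rearrangement hinges on the $\preceq$-maximality of $(x_0,y_0)$: by the transitivity argument used inside Lem.~\ref{le:orderonorder}, no event $z$ satisfies $x_0<_P z<_P y_0$, so in $w_1$ no event can be both started and terminated strictly between the terminator of $x_0$ and the starter of $y_0$; every starter or terminator in the intervening gap belongs either to an event concurrent with $x_0$ that precedes $y_0$ (started before the gap, terminated inside it), or to an event preceded by $x_0$ that is concurrent with $y_0$ (started inside the gap, terminated after it).

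\emph{Main obstacle.} The intricate part is actually carrying out the rearrangement when the gap contains interleaved starter- and terminator-blocks, since type-1 and type-2 transpositions only permute within a single block. One must exploit the structural dichotomy above to show that some dense representative of $[w_1]_\sim$ (obtained via Lem.~\ref{le:cohcoh} together with Lem.~\ref{le:transpositionEquiv}) has the required shape, with the ``left-overlap'' terminators coalescing into a single block immediately followed by the ``right-overlap'' starters. This is ultimately underpinned by the classical connectedness of the graph of linear extensions of a finite poset under adjacent transpositions, applied to the poset of event start/end markers associated with~$P$.
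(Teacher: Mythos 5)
Your overall strategy coincides with the paper's: induction on $\lvert{<_P}\setminus{<_Q}\rvert$, peeling off a $\preceq$-maximal pair $(x_0,y_0)$ via Lem.~\ref{le:orderonorder}, and realizing each single removal by one mixed transposition flanked by $\sim$-preserving rearrangements. The base case and the reduction of the inductive step to the one-pair case are fine. The gap is exactly where you flag the ``main obstacle'', and your proposed resolution does not close it. Connectedness of linear extensions under adjacent transpositions is not the issue: that only tells you how to pass between two dense words representing the \emph{same} ipomset (this is Lem.~\ref{le:transpositionEquiv}), whereas what you need is that a dense word of the required shape --- terminator of $x_0$ immediately followed by starter of $y_0$ --- \emph{exists} in $[w_1]_\sim$ at all. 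Your structural dichotomy does not give this: the gap may contain a starter of some $z$ (with $x_0<_P z$ and $z$ concurrent with $y_0$) occurring \emph{before} a terminator of some $w$ (with $w<_P y_0$ and $w$ concurrent with $x_0$); then $z$ and $w$ are concurrent, and the two steps cannot be interchanged by any $\sim$-preserving transposition, since swapping them would introduce the precedence $w<z$. Concretely, for $P$ with events $x_0,y_0,z,w$ and ${<_P}=\{(x_0,y_0),(x_0,z),(w,y_0)\}$, no dense decomposition of $P$ has the terminator of $x_0$ adjacent to the starter of $y_0$.

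The missing ingredient is the interval-order ($2{+}2$-free) property, which is how the paper closes this step: if such a pair $z,w$ existed, then $x_0<z$, $w<y_0$ and $z\parallel w$ would force, in any \emph{interval} ipomset obtained from $P$ by deleting the precedence $(x_0,y_0)$, at least one of $(x_0,z)$ or $(w,y_0)$ to be deleted as well --- impossible when exactly one pair is removed. This is what licenses bringing the two elementary steps together by transpositions of types 1 and 2 only, after which your single type-4 transposition does produce $w_R$. (A residual subtlety, shared with the paper, is that this argument needs the intermediate ipomset $R$ to be interval, which does not follow from Lem.~\ref{le:orderonorder} alone --- in the example above, deleting only $(x_0,y_0)$ yields a non-interval order even though $(x_0,y_0)$ is $\preceq$-maximal in a suitable $O$ --- so the choice of maximal pair, or the order in which pairs are removed, has to be made with intervality in mind.) Without the $2{+}2$-free argument your construction of $w_R$ does not go through.
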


\begin{proof}
  We proceed by induction on $ar = \vert {<_P} \setminus {<_Q}\vert$ (we again tacitly identify $<_P$ with its image in $Q$ by $f$).
  For $ar=0$ the claim is clear. 
  
  If $ar = 1$, then exists exactly one pair $(x,y) \in P \times P$ such that $x<_P y$ and, without loss of generality, $x \evord_Q y$.
  This means that $x$ must be terminated and $y$ started in both $P$ and $Q$.
  Moreover, there exists $P'_1 \dots P'_n \sim P_1 \dots P_n$ and $i, U, U'$
  such that $P'_i = \terminator U {\{x\}}$ and $P'_{i+1} = \starter {U'} {\{y\}}$.
  Else, there would exist $z, w$ (not necessarily distinct) such that $x < z$, $w<y$, $z \not < w$ and $w\not < z$.
  Then, removing the order between $x$ and $y$ would force to remove
  either between $x$ and $z$ or $w$ and $y$, which contradicts the assumption.
  We also have that $\tau_i(P'_1 \dots P'_n) \sim Q_1 \dots Q_n$.
  Using Lem.~\ref{le:transpositionEquiv},
  there are $g, h$ such that $Q_1 \dots Q_n = h \circ \tau_i \circ g(P_1 \dots P_n)$
  and every transposition respects the condition in the lemma.

  If $ar > 1$, then let $(x_0, y_0)$ and $P'$ be as defined in Lem.~\ref{le:orderonorder}, with $P'=P'_1\dots P'_n$ any dense decomposition.
  Using the base case, there exists a transformation from $P_1 \dots P_n$ to $P'_1\dots P'_n$.
  Then, $\vert {<_{P'}} \setminus {<_Q}\vert = ar - 1$, and using the induction hypothesis 
  allows us to conclude there exists a transformation from $P'_1\dots P'_n$ to $Q_1\dots Q_n$.
  By composing those two, we obtain a transformation from $P_1\dots P_n$ to $Q_1\dots Q_n$, hence the result. \qed
\end{proof}

\begin{example}
  Let $P = ab$, $Q = \loset{a \\ b}$ and
  $a\ibullet\, \ibullet a b\ibullet\, \ibullet b$
  and $b\ibullet \loset{\phantom{\ibullet}a \ibullet \\ \ibullet b \ibullet} \loset{\ibullet a \ibullet \\ \ibullet b \phantom{\ibullet}} \ibullet a$
  be dense step decompositions of $P$ resp.\ $Q$.
  An example of a sequence as in Lem.~\ref{le:existseq} is 
  \begin{align*}
    w_0 &=  a\ibullet \ibullet ab\ibullet \ibullet b, \\
    w_1 &=  a\ibullet \loset{\ibullet a \ibullet \\ \phantom{\ibullet}b \ibullet} \loset{\ibullet a \phantom{\ibullet}\\ \ibullet b\ibullet} \ibullet b, \\
    w_2 &= a\ibullet \loset{\ibullet a \ibullet \\ \phantom{\ibullet}b \ibullet} \loset{\ibullet a \ibullet \\ \ibullet b\phantom{\ibullet}} \ibullet a, \\
    w_3 &= b\ibullet \loset{\phantom{\ibullet}a \ibullet \\ \ibullet b \ibullet} \loset{\ibullet a \ibullet \\ \ibullet b\phantom{\ibullet}} \ibullet a,
  \end{align*}
  with $i_1 = 2$, $i_2 = 3$, and $i_3 = 1$.
\end{example}

Let $<_e$ be the relation on $\Coh$ defined by 
$w_2<_e w_1$ if there is an index $i$ such that $w_2=\tau_i(w_1)$
and $P_i P_{i+1} = \starter{U} A \terminator {U'} B$ with $A \cap B = \emptyset$
(third case of Def.~\ref{de:transpo}).
Denote by the same symbol ${<}_e$ the relation induced in the quotient $\Cohs$
and by ${\le}={<_e^*}$ the reflexive, transitive closures.

\begin{theorem}
  \label{th:subsumption}
  For $P_1, P_2\in \iPomss$, $P_1\subseq P_2$ iff $\Phi(P_1)\le \Phi(P_2)$.
  For $w_1, w_2\in \Cohs$, $w_1\le w_2$ iff $\Psi(w_1)\subseq \Psi(w_2)$.
\end{theorem}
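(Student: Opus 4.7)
The plan is to deduce both equivalences at once from Thm.~\ref{th:genipoms}: since $\Phi$ and $\Psi$ are mutually inverse isomorphisms of categories, the statements $P_1\subseq P_2 \iff \Phi(P_1)\le \Phi(P_2)$ and $w_1\le w_2 \iff \Psi(w_1)\subseq\Psi(w_2)$ are logically equivalent, so I only treat the first.

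For the direction $\Phi(P_1)\le\Phi(P_2) \Rightarrow P_1\subseq P_2$, I unfold $\le$ on $\Cohs$ as the reflexive-transitive closure of the induced $<_e$. Each elementary step $[u]_\sim <_e [v]_\sim$ in $\Cohs$ lifts to representatives with $u=\tau_i(v)$ where $v$ has a starter at position $i$ followed by a terminator at position $i+1$ (Def.~\ref{de:transpo}, case 3), and Lem.~\ref{le:wtos} then gives $\Psi(u)\subs\Psi(v)$. Since $\Psi$ descends to $\Cohs$, equality of $\sim$-classes translates to equality of ipomsets, and subsumption is closed under composition (immediate from the definition of subsumption). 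Chaining the resulting sequence of $\subs$ and $=$ steps yields $P_1=\Psi(\Phi(P_1))\subseq\Psi(\Phi(P_2))=P_2$.

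For the converse $P_1\subseq P_2 \Rightarrow \Phi(P_1)\le \Phi(P_2)$, fix a subsumption $f\colon P_1\to P_2$ and choose dense step decompositions $w_{P_1},w_{P_2}$ of $P_1,P_2$; these have the same length because $f$ bijects events and preserves source and target sets. Lem.~\ref{le:existseq} yields dense words $u_1=w_{P_1},u_2,\dots,u_m=w_{P_2}$ with $u_{k+1}=\tau_{i_k}(u_k)$. The crucial point is that each $\tau_{i_k}$ is of type 1, 2, or 4 of Def.~\ref{de:transpo}; type 3 does not occur. Indeed, the $\sim$-rearrangements between different dense decompositions of the same step sequence use only types 1 and 2 (by Lem.~\ref{lem:transpositionsSorT}, via the sparse-representative factorization used in the proof of Lem.~\ref{le:transpositionEquiv}), and the order-stripping step in the base case of Lem.~\ref{le:existseq}, coming from the maximal removable order of Lem.~\ref{le:orderonorder}, is by construction a terminator-starter swap, i.e.\ case 4. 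Type 1 and 2 steps give $u_k\sim u_{k+1}$ by Lem.~\ref{lem:transim}. A type 4 step gives $u_{k+1}$ with a starter-terminator pair at position $i_k$ and $u_k=\tau_{i_k}(u_{k+1})$, which is precisely the definition of $u_k<_e u_{k+1}$. In either case $[u_k]_\sim\le [u_{k+1}]_\sim$ in $\Cohs$, so chaining gives $\Phi(P_1)=[w_{P_1}]_\sim \le [w_{P_2}]_\sim = \Phi(P_2)$.

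The main obstacle is substantiating the claim that only types 1, 2, 4 occur in the application of Lem.~\ref{le:existseq}. This rests on the directedness of the construction: each elementary operation strips one order via a terminator-starter (type 4) swap, and auxiliary rearrangements only shuffle within blocks of like (all-starter or all-terminator) steps, which forces type 1 or 2. The type-3 swap---which would \emph{add} order---is never needed because we travel strictly downward along the subsumption. Verifying this requires a careful audit of the proof of Lem.~\ref{le:existseq} but introduces no new combinatorial difficulty.
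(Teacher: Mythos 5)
Your proof is correct and follows essentially the same route as the paper, whose own proof is just the citation ``By Lemmas~\ref{le:wtos} and~\ref{le:existseq}''. You supply exactly the details the paper leaves implicit---in particular the audit showing that the sequence produced by Lem.~\ref{le:existseq} uses only transpositions of types 1, 2 and 4, which is indeed the point needed to turn that lemma into the inequality $\Phi(P_1)\le\Phi(P_2)$.
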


\begin{proof}
  By Lemmas~\ref{le:wtos} and~\ref{le:existseq}. \qed
\end{proof}

\begin{corollary}
  Subsumptions of ipomsets are freely generated by the relation~$<_e$.
\end{corollary}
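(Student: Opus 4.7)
The plan is to read this corollary as a direct translation of Theorem~\ref{th:subsumption} through the category isomorphism $\Phi \colon \iPomss \leftrightarrows \Cohs : \Psi$ of Theorem~\ref{th:genipoms}. Under this isomorphism, the relation $\le$ on $\Cohs$ (the reflexive transitive closure of $<_e$) transports to a reflexive transitive relation on $\iPomss$, and Theorem~\ref{th:subsumption} identifies this transported relation precisely with the subsumption preorder $\subseq$. So the only thing to verify is that the generators match up on both sides and that no extra relations are needed.

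First I would spell out the correspondence between generators. A single $<_e$-step $w_2 <_e w_1$ in $\Cohs$ is, by definition, a third-case transposition $\tau_i$ applied to an occurrence of $\starter{U}{A}\terminator{U'}{B}$ in a coherent representative. By Lemma~\ref{le:wtos}, applying $\Psi$ to such a step yields an elementary subsumption $\Psi(w_2) \subs \Psi(w_1)$ that introduces precedence exactly between the events of $B$ and those of $A$, which were concurrent beforehand. Conversely, the base case ($ar=1$) of Lemma~\ref{le:existseq} shows that every elementary subsumption of ipomsets is induced, up to $\sim$, by a single $<_e$-step between appropriate dense step decompositions.

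Putting these together, Theorem~\ref{th:subsumption} says that every subsumption decomposes as a finite chain of $<_e$-steps in $\Cohs$. The congruence $\sim$ only identifies words obtained by permuting successive starters or successive terminators (Lemma~\ref{lem:transim}) and by collapsing identities, neither of which is a subsumption step; so no non-trivial identifications between chains of elementary subsumptions are imposed beyond those already absorbed into $\Cohs$, justifying the word ``freely''. I do not foresee a serious obstacle: the substantive work has already been done in Lemmas~\ref{le:wtos} and \ref{le:existseq}, and the corollary simply repackages Theorem~\ref{th:subsumption} in terms of $\iPomss$ rather than $\Cohs$.
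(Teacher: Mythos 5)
Your proposal is correct and matches the paper's intent: the corollary is stated without a separate proof precisely because it is a repackaging of Theorem~\ref{th:subsumption} (itself proved via Lemmas~\ref{le:wtos} and~\ref{le:existseq}) through the isomorphism $\Phi \dashv \Psi$, which is exactly the reading you give. Your additional remarks on the generator correspondence and on why $\sim$ introduces no extra identifications are accurate (modulo the small point that a single $<_e$-step yields an \emph{elementary} subsumption only for dense representatives, which is the case relevant to the generation claim), so nothing further is needed.
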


\section{Higher-Dimensional Automata and ST-Automata}
\label{se:hda-st-a}

We now transfer the isomorphisms of the previous Sections \ref{se:ipomsets-step} and \ref{se:step-subsumptions} to the operational side.
We recall higher-dimensional automata which generate ipomsets
and introduce ST-automata which generate step sequences,
and we clarify their relation.

\subsection{Higher-dimensional automata} \label{sse:hda}
\label{se:hda}

We give a quick introduction to higher-dimensional automata and their languages
and refer the interested reader to \cite{DBLP:conf/apn/FahrenbergZ23, DBLP:conf/ictac/AmraneBFZ23} for details and examples.

A \emph{precubical set}
\begin{equation*}
  X=(X, {\ev}, \{\delta_{A, U}^0, \delta_{A, U}^1\mid U\in \square, A\subseteq U\})
\end{equation*}
consists of a set of \emph{cells} $X$
together with a function $\ev: X\to \square$ which to every cell assigns a conclist of concurrent events which are active in it.
We write $X[U]=\{q\in X\mid \ev(q)=U\}$ for the cells of type $U$.
For every $U\in \square$ and $A\subseteq U$ there are \emph{face maps}
$\delta_{A}^0, \delta_{A}^1: X[U]\to X[U\setminus A]$
(we often omit the extra subscript $U$)
which satisfy
\begin{equation}
  \label{eq:prid}
  \text{%
    $\delta_A^\nu \delta_B^\mu = \delta_B^\mu \delta_A^\nu$
    for $A\cap B=\emptyset$ and $\nu, \mu\in\{0, 1\}$.%
  }
\end{equation}
The \emph{upper} face maps $\delta_A^1$ terminate events in $A$
and the \emph{lower} face maps $\delta_A^0$ transform a cell $q$
into one in which the events in $A$ have not yet started.

A \emph{higher-dimensional automaton} (\emph{HDA}) $X=(X, \bot, \top)$
is a precubical set together with subsets $\bot, \top\subseteq X$
of \emph{start} and \emph{accept} cells.
Note that we do not assume HDAs to be finite here;
finiteness is needed when reasoning about regular languages, see \cite{DBLP:conf/concur/FahrenbergJSZ22, DBLP:conf/apn/FahrenbergZ23},
but we will not do that here.
See Fig.~\ref{fi:hdast} for an example.

A \emph{path} in an HDA $X$ is a sequence
$\alpha=(q_0, \phi_1, q_1,\dots, \phi_n, q_n)$
consisting of cells $q_i\in X$ and symbols $\phi_i$ which indicate face map types:
for every $i=1,\dots, n$, $(q_{i-1}, \phi_i, q_i)$ is either
\begin{itemize}
\item $(\delta^0_A(q_i), \arrO{A}, q_i)$ for $A\subseteq \ev(q_i)$ (an \emph{upstep}) or
\item $(q_{i-1}, \arrI{A}, \delta^1_A(q_{i-1}))$ for $A\subseteq \ev(q_{i-1})$ (a \emph{downstep}).
\end{itemize}
The \emph{source} and \emph{target} of $\alpha$ as above are $\src(\alpha)=q_0$ and $\tgt(\alpha)=q_n$,
and $\alpha$ is \emph{accepting} if $\src(\alpha)\in \bot$ and $\tgt(\alpha)\in \top$.
Paths $\alpha$ and $\beta$ may be \emph{concatenated} to $\alpha*\beta$ if $\tgt(\alpha)=\src(\beta)$.

The \emph{event ipomset} $\ev(\alpha)$ of a path $\alpha$ is defined recursively as follows:
\begin{itemize}
\item if $\alpha=(q)$, then
  $\ev(\alpha)=\id_{\ev(q)}$;
\item if $\alpha=(q\arrO{A} p)$, then
  $\ev(\alpha)=\starter{\ev(p)}{A}$;
\item if $\alpha=(p\arrI{B} q)$, then
  $\ev(\alpha)=\terminator{\ev(p)}{B}$;
\item if $\alpha=\alpha_1*\dotsm*\alpha_n$ is a concatenation, then $\ev(\alpha)=\ev(\alpha_1)*\dotsm*\ev(\alpha_n)$.
\end{itemize}

The \emph{language} of an HDA $X$ is
\begin{equation*}
  \Lang(X) = \{[\ev(\alpha)]_\simeq\mid \alpha \text{ accepting path in } X\} \subseteq 2^{\iPomss}.
\end{equation*}
Languages of HDAs are closed under subsumption \cite{DBLP:conf/concur/FahrenbergJSZ22}: whenever $P\subseq Q\in L(X)$, then also $P\in L(X)$.

\subsection{ST-automata} 
\label{sse:st-a}

\begin{definition}
  \label{de:staut}
  An \emph{ST-automaton} is a structure $A=(Q, E, I, F, \lambda)$
  consisting of sets $Q$, $E\subseteq Q\times \Omegas\times Q$, $I, F\subseteq Q$,
  and a function $\lambda: Q\to \square$ such that
  for all $(q, \ilo{S}{U}{T}, r)\in E$, $\lambda(q)=S$ and $\lambda(r)=T$.
\end{definition}

This is thus a plain automaton over $\Omegas$ (finite or infinite)
with an additional labeling of states with conclists that is consistent with the labeling of edges.
(But note that the alphabet $\Omegas$ is infinite.)

\begin{remark}
  \label{re:staomega}
  Equivalently, an ST-automaton may be defined as a directed multigraph $G$
  together with a graph morphism $\ev: G\to \bOmegas$
  and initial and final states $I$ and $F$.
  This definition would be slightly more general than the one above,
  given that it allows for multiple edges with the same label between the same pair of states.
\end{remark}

A \emph{path} in an ST-automaton $A$ is defined as usual:
an alternating sequence $\pi=(q_0, e_1, q_1,\dots, e_n, q_n)$
of states $q_i$ and transitions $e_i$ such that $e_i=(q_{i-1}, P_i, q_i)$ for  every $i=1,\dots, n$
and a sequence $P_1,\dots, P_n\in \Omegas$.
The path is \emph{accepting} if $q_0\in I$ and $q_n\in F$.
The \emph{label} of $\pi$ as above is
$\ell(\pi)=[\id_{\lambda(q_0)} P_1 \id_{\lambda(q_1)}\dots P_n \id_{\lambda(q_n)}]_\sim$,
the equivalence class under $\sim$.

The \emph{language} of an ST-automaton $A$ is
\begin{equation*}
  \Lang(A) = \{\ell(\pi)\mid \pi \text{ accepting path in } A\} \subseteq 2^{\Cohs}.
\end{equation*}
Contrary to languages of HDAs, languages of ST-automata may not be closed under subsumption, see below.

\subsection{From HDAs to ST-automata}
\label{sse:hda-sta-translations}

We now define translations between HDAs and ST-automata.
In order to relate them to their languages,
we extend the pair of functors $\Phi: \iPomss\leftrightarrows \Cohs: \Psi$
to the power sets the usual way:
\begin{equation*}
  \Phi(A) = \{\Phi(P)\mid P\in A\}, \qquad
  \Psi(B) = \{\Psi(w)\mid w\in B\}.
\end{equation*}

To a given HDA $X=(X, \bot, \top)$ we associate an ST-automaton $\ST(X)=(Q, E, I, F, \lambda)$ as follows:
\begin{itemize}
\item $Q=X$, $I=\bot$, $F=\top$, $\lambda=\ev$, and
\item $E=\{(\delta_A^0(q), \starter{\ev(q)}{A}, q)\mid A\subseteq \ev(q)\}
  \cup \{(q, \terminator{\ev(q)}{A}, \delta_A^1(q))\mid A\subseteq \ev(q)\}$.
\end{itemize}
That is, the transitions of $\ST(X)$ precisely mimic the starting and terminating of events in $X$.
(Note that lower faces in $X$ are inverted to get the starting transitions.)

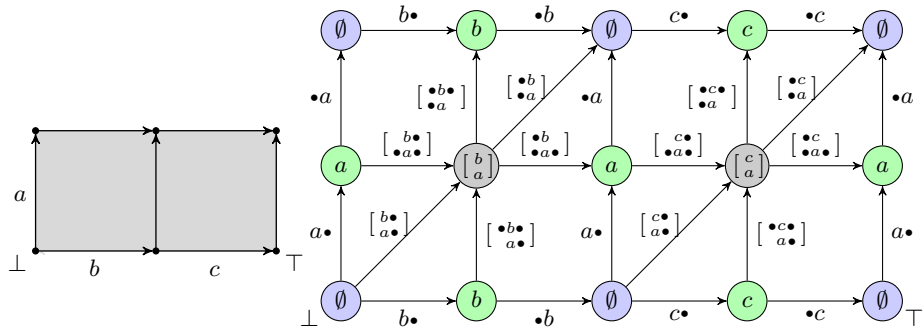
\begin{figure}[tbp]
	\centering
	\begin{tikzpicture}[x=.7cm, y=.62cm]
		\begin{scope}[shift={(0,0)}, x=.8cm, y=.8cm]
			\filldraw[color=black!15] (0,2)--(0,4)--(4,4)--(4,2)--(0,2);
			\filldraw (0,2) circle (0.05);
			\filldraw (2,2) circle (0.05);
			\filldraw (0,4) circle (0.05);
			\filldraw (4,2) circle (0.05);
			\filldraw (4,4) circle (0.05);
			\filldraw (2,4) circle (0.05);
			\path[line width=.5] (0,2) edge node[left, black] {$\vphantom{b}a$} (0,3.95);
			\path[line width=.5] (0,2) edge node[pos=.5, below, black] {$b$} (1.95,2);
			\path[line width=.5] (2,2) edge node[pos=.5, below, black] {$\vphantom{bg}c$} (3.95,2);
			\path[line width=.5] (2,4) edge (3.95,4);
			\path[line width=.5] (0,4) edge   (1.95,4);
			\path[line width=.5] (2,2) edge (2,3.95);
			\path[line width=.5] (4,2) edge (4,3.95);
			\node[left] at (0,1.8) {$\bot$};
			\node[right] at (4,1.8) {$\top$};
			
		\end{scope}
		\begin{scope}[shift={(5.8,-4.3)}, x=0.9cm, y=0.9cm]
			\node[circle,draw=black,fill=blue!20,inner sep=0pt,minimum size=15pt]
			(ac) at (0,4) {$\vphantom{hy}\emptyset$};
			\node[circle,draw=black,fill=blue!20,inner sep=0pt,minimum size=15pt]
			(cc) at (4,4) {$\vphantom{hy}\emptyset$};
			\node[circle,draw=black,fill=blue!20,inner sep=0pt,minimum size=15pt]
			(ae) at (0,8) {$\vphantom{hy}\emptyset$};
			\node[circle,draw=black,fill=blue!20,inner sep=0pt,minimum size=15pt]
			(ec) at (8,4) {$\vphantom{hy}\emptyset$};
			\node[circle,draw=black,fill=blue!20,inner sep=0pt,minimum size=15pt]
			(ce) at (4,8) {$\vphantom{hy}\emptyset$};
			\node[circle,draw=black,fill=blue!20,inner sep=0pt,minimum size=15pt]
			(ee) at (8,8) {$\vphantom{hy}\emptyset$};
			\node[circle,draw=black,fill=green!30,inner sep=0pt,minimum size=15pt]
			(bc) at (2,4) {$\vphantom{hy}b$};
			\node[circle,draw=black,fill=green!30,inner sep=0pt,minimum size=15pt]
			(ad) at (0,6) {$\vphantom{hy}a$};
			\node[circle,draw=black,fill=green!30,inner sep=0pt,minimum size=15pt]
			(be) at (2,8) {$\vphantom{hy}b$};
			\node[circle,draw=black,fill=green!30,inner sep=0pt,minimum size=15pt]
			(cd) at (4,6) {$\vphantom{hy}a$};
			\node[circle,draw=black,fill=green!30,inner sep=0pt,minimum size=15pt]
			(de) at (6,8) {$\vphantom{hy}c$};
			\node[circle,draw=black,fill=green!30,inner sep=0pt,minimum size=15pt]
			(dc) at (6,4) {$\vphantom{hy}c$};
			\node[circle,draw=black,fill=green!30,inner sep=0pt,minimum size=15pt]
			(ed) at (8,6) {$\vphantom{hy}a$};
			\node[circle,draw=black,fill=black!20,inner sep=0pt,minimum size=15pt]
			(bd) at (2,6) {$\vphantom{hy}\loset{b \\ a}$};
			\node[circle,draw=black,fill=black!20,inner sep=0pt,minimum size=15pt]
			(dd) at (6,6) {$\vphantom{hy}\loset{c \\ a}$};
			\path (ac) edge node[below] {$b\ibullet$} (bc);
			\path (bc) edge node[below] {$\ibullet b$} (cc);
			\path (ac) edge node[left] {$a \ibullet$} (ad);
			\path (ad) edge node[left] {$\ibullet a$} (ae);
			\path (bc) edge node[right] {$\loset{\ibullet b \ibullet \\ \hphantom{\ibullet} a \ibullet }$} (bd);
			\path (bd) edge node[left] {$\loset{\ibullet b \ibullet \\ \ibullet a \hphantom{\ibullet}}$} (be);
			\path (cc) edge node[below] {$c\ibullet$} (dc);
			\path (dc) edge node[below] {$\ibullet c$} (ec);
			\path (ad) edge node[above] {$\loset{\hphantom{\ibullet} b \ibullet \\ \ibullet a \ibullet}$} (bd);
			\path (bd) edge node[above] {$\loset{\ibullet b \hphantom{\ibullet}\\ \ibullet a \ibullet}$} (cd);
			\path (ae) edge node[above] {$b \ibullet$} (be);
			\path (be) edge node[above] {$\ibullet b$} (ce);
			\path (bd) edge node[above left=-0.15cm] {$\loset{\ibullet b \\ \ibullet a}$} (ce);
			\path (ac) edge node[above left=-0.15cm] {$\loset{b \ibullet \\ a \ibullet}$} (bd);
			\path (cd) edge node[above] {$\loset{\hphantom{\ibullet}c \ibullet \\ \ibullet a \ibullet}$} (dd);
			\path (dd) edge node[above] {$\loset{\ibullet c \hphantom{\ibullet}\\ \ibullet a \ibullet}$} (ed);
			\path (ce) edge node[above] {$c \ibullet $} (de);
			\path (de) edge node[above] {$\ibullet c$} (ee);
			\path (cc) edge node[left] {$a \ibullet$} (cd);
			\path (cd) edge node[left] {$\ibullet a$} (ce);
			\path (dc) edge node[right] {$\loset{\ibullet c \ibullet \\ \hphantom{\ibullet} a \ibullet}$} (dd);
			\path (dd) edge node[left] {$\loset{\ibullet c \ibullet \\   \ibullet a \hphantom{\ibullet}}$} (de);
			\path (ec) edge node[right]{$a \ibullet $} (ed);
			\path (ed) edge node[right] {$\ibullet a$} (ee);
			\path (dd) edge node[above left=-0.15cm] {$\loset{ \ibullet c \\ \ibullet a }$} (ee);
			\path (cc) edge node[above left=-0.15cm] {$\loset{c \ibullet \\ a \ibullet}$} (dd);
			\node[below left] at (ac) {$\bot\;\;$};
			\node[below right] at (ec) {$\;\;\top$};
		\end{scope}

	\end{tikzpicture}
	\caption{Two-dimensional HDA $X$ (left) and corresponding ST-automaton $\ST(X)$ (right).}
	\label{fi:hdast}
\end{figure}

\begin{example}
  Figure \ref{fi:hdast} shows an HDA $X$ with $\Lang(X)=\{bc\}$
  together with its translation to an ST-automaton $\ST(X)$.
\end{example}

\begin{theorem}
  For any HDA $X$, $\Lang(\ST(X))=\Phi(\Lang(X))$.
\end{theorem}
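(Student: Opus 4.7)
The plan is to exhibit a natural bijection between accepting paths in $X$ and accepting paths in $\ST(X)$ and to show that, under this bijection, path labels in $\Cohs$ and event ipomsets in $\iPomss$ are related by $\Phi$.

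First, I set up the bijection. Given an accepting path $\alpha=(q_0,\phi_1,q_1,\dots,\phi_n,q_n)$ in $X$, I associate $\pi=(q_0,e_1,q_1,\dots,e_n,q_n)$ in $\ST(X)$ by replacing each upstep $\phi_i=\arrO{A}$ by $e_i=(q_{i-1},\starter{\ev(q_i)}{A},q_i)$ and each downstep $\phi_i=\arrI{A}$ by $e_i=(q_{i-1},\terminator{\ev(q_{i-1})}{A},q_i)$. These are exactly the edges supplied by the edge set $E$ of $\ST(X)$, and conversely every edge of $\ST(X)$ has precisely one of these two forms, so the assignment is a bijection on paths. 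Since $I=\bot$ and $F=\top$, acceptance is preserved.

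Next, I match labels. Writing $P_i$ for the starter or terminator label of $e_i$, a direct inspection shows $\ev(\alpha_i)=P_i$ for each single-step subpath $\alpha_i=(q_{i-1},\phi_i,q_i)$. The compositional clause in the definition of $\ev$ on paths then gives $\ev(\alpha)=P_1*\cdots*P_n$, with $\ev(\alpha)=\id_{\ev(q_0)}$ when $n=0$. On the other hand, the $\id_{\lambda(q_i)}$ are units in the free category $\Coh$, so $\ell(\pi)=[P_1\dots P_n]_\sim$ (resp.\ $[\id_{\lambda(q_0)}]_\sim$ when $n=0$). Applying $\Psi$, which sends each generator to itself and preserves gluing, yields $\Psi(\ell(\pi))=\ev(\alpha)$. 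From $\Phi\circ\Psi=\Id_{\Cohs}$ in Theorem~\ref{th:genipoms} I conclude $\ell(\pi)=\Phi(\ev(\alpha))$.

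Finally, since $\Phi$ is defined on isomorphism classes of ipomsets, ranging over the above bijection gives
\begin{equation*}
\Lang(\ST(X))=\{\ell(\pi)\mid\pi\text{ accepting}\}=\{\Phi([\ev(\alpha)]_\simeq)\mid\alpha\text{ accepting}\}=\Phi(\Lang(X)).
\end{equation*}
I do not anticipate a real obstacle. The only small subtlety is handling the identity factors $\id_{\lambda(q_i)}$ occurring in the definition of $\ell(\pi)$, which is taken care of by the unit laws in $\Coh$ together with the relation $\id_U\sim\ilo{U}{U}{U}$; everything else is a direct translation along the isomorphism $\Phi:\iPomss\leftrightarrows\Cohs:\Psi$ established in the previous section.
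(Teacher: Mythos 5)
Your proof is correct and takes essentially the same route as the paper's: a step-by-step correspondence between accepting paths of $X$ and of $\ST(X)$ (upsteps to starter edges, downsteps to terminator edges), with labels matched via the functors $\Phi$ and $\Psi$ and the identity factors absorbed by $\sim$. The only cosmetic caveat is that this correspondence is not strictly injective when empty steps $A=\emptyset$ are permitted (an empty upstep and an empty downstep yield the same identity-labelled edge), but since only surjectivity and label preservation are needed for the two inclusions, the argument is unaffected.
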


\begin{proof}
  For identities note that a path with a single cell $q$ is accepting in 
  $X$ if and only if it is accepting in $\ST(X)$,
  and $\Phi(\id_{\ev(q)}) = [\id_{\lambda(q)}]_\sim$.
  Now let $w = P_1\dots P_m \in \Lang(\ST(X))$ be a non-identity.
  By definition, there exists $\pi=(q_0, e_1, q_1,\dots, e_n, q_n)$
  where $e_i = (q_{i-1}, P'_i, q_i)$, $P'_i \in \Omegas$
  such that $\id_{\lambda(q_{0})}P'_1 \id_{\lambda(q_1)} \dots P'_n \id_{\lambda(q_n)} \sim w$.
  This means that $P'_1* \dots *P'_n$ is a decomposition of some $P \in \Lang(X)$,
  hence $w \in \Phi(L(X))$.
		
  For the converse, let $w = P_1\dots P_m \in \Phi(L(X))$.
  Let $P'_1*\dots *P'_n$ be the sparse step decomposition of $P = P_1 * \dots * P_m$.
  We have $P'_1\dots P'_n \sim w$.
  In addition, there exists an accepting path $\alpha = \beta_1 * \dots * \beta_n$  in $X$ such that $\ev(\beta_i) = P'_i$.
  By construction there exists an accepting path
  $\pi = (\src(\beta_1),e_1,\tgt(\beta_1),\dots,e_n,\tgt(\beta_n))$ in $\ST(X)$ where $e_i = (\src(\beta_i),P'_i,\tgt(\beta_i))$.
  We have $\ell(\pi) \sim w$. \qed

\end{proof}

\subsection{From ST-automata to HDAs}

Let $A=(Q, E, I, F, \lambda)$ be an ST-automaton, then we define an HDA $\HD(A)=(X, \bot, \top)$.
Ideally, the cells of the precubical set $X$ would be the states of $A$ and $\ev=\lambda$;
but this does not quite work as the result is not necessarily a precubical set.
The difficulty is in the face maps which we first define as \emph{relations} on $Q$:
For every $U\in \square$ and $A\subseteq U$, let
\begin{equation*}
  \delta_{A, U}^0 = \{(q, p)\mid (p, \starter{U}{A}, q)\in E\}, \qquad
  \delta_{A, U}^1 = \{(q, r)\mid (q, \terminator{U}{A}, r)\in E\}.
\end{equation*}
(Hence, starting transitions in $A$ are inverted to get lower face relations.)
Now there are three problems which may appear:
the so-defined face relations may not be \emph{total} (\ie undefined for some faces),
they may not be \emph{functional} (\ie multi-valued for some faces),
and they may not satisfy the precubical identities~\eqref{eq:prid}.

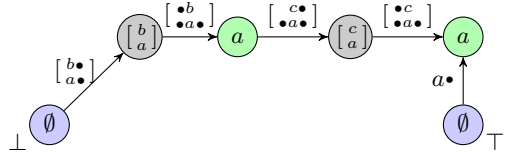
\begin{figure}[tbp]
	\centering
	\begin{tikzpicture}[y=.9cm, scale=1, every node/.style={transform shape}]
		\tikzstyle{corner} = [circle,draw=black,inner sep=0pt,minimum size=15pt]
		\node[corner,fill=blue!20]			
		(x) at (0,0) {$\vphantom{hy}\emptyset$};
		\node[corner,fill=black!20] 			
		(y) at (1.2,1.3) {$\vphantom{hy}\loset{b \\ a}$};
		\node[corner,fill=green!30]
		(z) at (2.5,1.3) {$\vphantom{hy}a$};
		\node[corner,fill=black!20]
		(t) at (4,1.3) {$\vphantom{hy}\loset{c \\ a}$};
		\node[corner,fill=green!30]
		(u) at (5.5,1.3) {$\vphantom{hy}a$};
		\node[corner,fill=blue!20]
		(v) at (5.5,0) {$\vphantom{hy}\emptyset$};
        \node[below left] at (x) {$\bot\;\;$};
        \node[below right] at (v) {$\;\;\top$};
		\path (x) edge node[above left=-0.15cm] {$\loset{b \ibullet  \\ a \ibullet   }$} (y);
		\path (y) edge node[above] {$\loset{\ibullet b \phantom{\ibullet} \\ \ibullet a \ibullet }$} (z);
		\path (z) edge node[above] {$\loset{ \phantom{\ibullet} c \ibullet\\ \ibullet a \ibullet }$} (t);
		\path (t) edge node[above] {$\loset{ \ibullet c \phantom{\ibullet} \\ \ibullet a \ibullet }$} (u);
		\path (v) edge node[left] {$a \ibullet $} (u);
	\end{tikzpicture}
\caption{An ST-automaton with missing face relations, \cf Ex.~\ref{ex:sta}.}
\label{fig:sta}
\end{figure}

\begin{example}
  \label{ex:sta}
  Let $A$ be the ST-automaton of Fig.~\ref{fig:sta}.
  Then $\Lang(A)=\emptyset$ as there are no paths from $I$ to $F$.
  Translating $A$ to an HDA will require adding all faces
  which are shown on the right of Fig.~\ref{fi:hdast}.
\end{example}

The first problem is easily solved, as we may just freely add missing faces.
Given that we address functionality afterwards,
we can as well freely add \emph{all} faces, so define
\begin{equation*}
  \bar{Q} = \{(q, B, C)\mid q\in Q, B, C\subseteq \ev(q), B\cap C=\emptyset\},
\end{equation*}
with $\bar{\ev}: \bar{Q}\to \square$ given by $\bar{\ev}((q, B, C))=\ev(q)\setminus (B\cup C)$
and
\begin{align*}
  \bar{\delta}_{A, U}^0 &= \{((q, B, C), (r, B, C))\mid
  (q, r)\in \delta_{A, U}^0,
  A\subseteq U\setminus (B\cup C)\} \\
  &\hspace*{6em}\cup \{((q, B, C), (q, B\cup A, C))\mid A\cap C=\emptyset\}, \\
  \bar{\delta}_{A, U}^1 &= \{((q, B, C), (r, B, C))\mid
  (q, r)\in \delta_{A, U}^1,
  A\subseteq U\setminus (B\cup C)\} \\
  &\hspace*{6em}\cup \{((q, B, C), (q, B, C\cup A))\mid A\cap B=\emptyset\}.
\end{align*}
That is,
each existing cell $q \in Q$ is associated with $(q,\emptyset,\emptyset)$,
and $(q,B,C)$ is its (formal) face where $B$ is unstarted and $C$ terminated.
Existing face maps $(q,r) \in \delta^\nu_{A,U}$ are copied to each pair $((q,B,C),(r,B,C))$
for which $A \subseteq U \setminus (B \cup C)$.

In order to solve the second problem, we need to identify some elements of $\bar{Q}$ with others.
Let $\sim$ be the equivalence relation on $\bar{Q}$ generated by
\begin{equation*}
  q\sim q'\land \exists A\subseteq U\in \square, \nu\in \{0, 1\}: (q, r), (q', r')\in \bar{\delta}_{A, U}^\nu \implies r\sim r',
\end{equation*}
let $Q'=\bar{Q}_\sim$ be the quotient,
and let $\delta_{A, U}^0$, $\delta_{A, U}^1$ be the face relations induced on $Q'$.
These are now single-valued and total, \ie functions.
Given that $q\sim q'$ implies $\bar{\ev}(q)=\bar{\ev}(q')$, also $\bar{\ev}$ passes to the quotient.

Lastly, we need to make sure that the precubical identities \eqref{eq:prid} are satisfied.
This may again be done by defining an equivalence relation, again denoted $\sim$, on $Q'$,
which identifies faces which should be equal according to \eqref{eq:prid}.
Let $X=Q'_\sim$ be the quotient,
and let $\ev$, $\delta_{A, U}^0$, $\delta_{A, U}^1$ be the mappings induced on $X$.
Then $X$ is a precubical set, and we may define $\bot, \top\subseteq X$
to be the equivalence classes of $I, F\subseteq Q$.

\begin{example}
  \label{ex:sthda}
  Continuing Ex.~\ref{ex:sta},
  let again $A$ be the ST-automaton of Fig.~\ref{fig:sta}.
  Then $\HD(A)$ is the HDA of Fig.~\ref{fi:hdast}
  and $\Lang(\HD(A))=\{bc\}\ne \emptyset=\Lang(A)$.
\end{example}

\begin{remark}
  The above translation from ST-automata to HDAs may be understood as a colimit in a presheaf category.
  HDAs are presheaves over a \emph{precube} category, see \cite{DBLP:conf/concur/FahrenbergJSZ22}.
  Using Rem.~\ref{re:staomega}, we may view an ST-automaton $A$ as a morphism
  into a variant of $\bOmegas$ which may be embedded into that precube category
  (inverting starters in the process),
  and then $\HD(A)$ is the colimit closure of the composition of these morphisms with the Yoneda embedding.
\end{remark}

\begin{theorem}
  For any ST-automaton $A$, $\Psi(\Lang(A))\subseteq\Lang(\HD(A))$.
\end{theorem}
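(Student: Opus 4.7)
The plan is to lift any accepting path $\pi$ in $A$ directly to an accepting path $\alpha$ in $\HD(A)$ via the canonical map $q \mapsto \bar{q} := [(q, \emptyset, \emptyset)] \in X$. Given $w \in \Psi(\Lang(A))$, there exists an accepting path $\pi = (q_0, e_1, q_1, \dots, e_n, q_n)$ in $A$ with $e_i = (q_{i-1}, P_i, q_i)$ and $P_i \in \Omegas$ such that $w = \Psi(\ell(\pi)) = P_1 * \dots * P_n$, since the identity factors $\id_{\lambda(q_i)}$ appearing in $\ell(\pi)$ are absorbed by gluing in $\iPomss$.

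The construction of $\alpha$ proceeds transition by transition. If $P_i = \starter{U}{A_i}$, then by the definition of $\delta^0_{A_i,U}$ on $Q$ we have $(q_i, q_{i-1}) \in \delta^0_{A_i, U}$; this lifts to $((q_i, \emptyset, \emptyset), (q_{i-1}, \emptyset, \emptyset)) \in \bar{\delta}^0_{A_i, U}$ and descends through both quotients to $\delta^0_{A_i}(\bar{q}_i) = \bar{q}_{i-1}$ in $X$. Therefore $(\bar{q}_{i-1}, \arrO{A_i}, \bar{q}_i)$ is a valid upstep with event ipomset $\starter{\ev(\bar{q}_i)}{A_i} = \starter{\lambda(q_i)}{A_i} = P_i$. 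Dually, if $P_i = \terminator{U}{A_i}$, one obtains a downstep $(\bar{q}_{i-1}, \arrI{A_i}, \bar{q}_i)$ with event $P_i$. Concatenating yields a path $\alpha$ from $\bar{q}_0$ to $\bar{q}_n$ with $\ev(\alpha) = P_1 * \dots * P_n = w$. Since $\bot$ and $\top$ in $\HD(A)$ are defined as the classes of $I$ and $F$, we have $\bar{q}_0 \in \bot$ and $\bar{q}_n \in \top$, so $\alpha$ is accepting and $w \in \Lang(\HD(A))$. The edge case $n=0$ is immediate: the single-cell path $(\bar{q}_0)$ has event $\id_{\lambda(q_0)} = w$.

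The main obstacle is the bookkeeping around the two successive quotients defining $X$: one must verify that the face pair coming from each transition of $A$ survives both identifications. The first quotient only merges \emph{targets} of face relations sharing a source, so $(q_i, \emptyset, \emptyset)$ still maps into a class containing $(q_{i-1}, \emptyset, \emptyset)$; the second quotient, which enforces the precubical identities~\eqref{eq:prid}, acts uniformly on representatives and again preserves this pair. The inclusion is in general strict rather than an equality: as illustrated by Ex.~\ref{ex:sthda}, cells and face maps added during the construction of $\HD(A)$ can open accepting paths with no counterpart in the original ST-automaton.
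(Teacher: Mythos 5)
Your proof is correct and follows essentially the same route as the paper's: each accepting path of $A$ is lifted transition-by-transition to an accepting path of $\HD(A)$, with starters becoming upsteps and terminators becoming downsteps, so that the event ipomset of the lifted path equals $\Psi(\ell(\pi))$. You are in fact more careful than the paper, which silently identifies states of $A$ with cells of $\HD(A)$, whereas you explicitly track the pair $((q_i,\emptyset,\emptyset),(q_{i-1},\emptyset,\emptyset))$ through the free completion and the two quotients.
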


\begin{proof}
  Let $\pi=(q_0, e_1, q_1,\dots, e_n, q_n)$, 
  $e_i=(q_{i-1}, P_i, q_i)$ be an accepting path in $A$.
  Then $\alpha=(q_0, \varphi_1, q_1,\dots, \varphi_n, q_n)$,
  where
  \begin{itemize}
  \item $\varphi_i=\arrO{A}$ if $P_i=\starter{U}{A}$ is a starter,
  \item $\varphi_i=\arrI{B}$ if $P_i=\terminator{U}{B}$ is a terminator,
  \end{itemize}
  is an accepting path in $\HD(A)$.
  Furthermore, 
  \begin{equation*}
    \Psi(\lambda(\pi))
    =
    \Psi(P_1 P_2\dotsm P_n)
    =
    P_1*\dotsm*P_n
    =
    \lambda(\alpha)
    \in\Lang(\HD(A)),
  \end{equation*}
  which concludes the proof. \qed
\end{proof}

Example \ref{ex:sthda} shows that the inclusion in the theorem may be strict. 
It is clear that for any HDA $X$, $\HD(\ST(X))=X$;
but as we have seen, $\ST(\HD(A))$ may be very different from a given ST-automaton $A$.
The following lemma collects a few properties of ST-automata
which are in the image of the HDA translation.

\begin{lemma}
  Let $A=\ST(X)$ for some HDA $X$.
  Then $A=(Q, E, I, F, \lambda)$ has the following properties:
  \begin{enumerate}
  \item for all $q\in Q$ with $\lambda(q)=\id_U$ and all $S, T\subseteq U$,
    there exist $p, r\in Q$ such that $(p, \ilo{S}{U}{U}, q), (q, \ilo{U}{U}{T}, r)\in E$;
  \item for all $(p, P, q), (q, Q, r)\in E$, if $P, Q\in \St$ or $P, Q\in \Te$, then also $(p, P*Q, r)\in E$;
  \item for all $(p, P*Q, r)\in E$ there is $q\in Q$ such that $(p, P, q), (q, Q, r)\in E$.
  \end{enumerate}
\end{lemma}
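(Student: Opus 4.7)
The plan is to treat the three properties separately, each by a direct unwinding of the definition of $\ST(X)$ combined with the precubical structure of $X$. The whole argument is essentially bookkeeping; no single step is deep.

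For property~(1), given $q \in Q$ with $\ev(q) = U$ and arbitrary $S, T \subseteq U$, I would set $A = U \setminus S$ and $B = U \setminus T$, then take $p = \delta_A^0(q)$ and $r = \delta_B^1(q)$ in the underlying HDA. Because $\starter{U}{A} = \ilo{S}{U}{U}$ and $\terminator{U}{B} = \ilo{U}{U}{T}$, the two required edges lie in $E$ immediately by the definition of $\ST(X)$.

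For property~(2), assume $P, Q \in \St$ (the terminator case is symmetric). The definition of $E$ forces $P = \starter{\ev(q)}{A}$ with $p = \delta_A^0(q)$ and $Q = \starter{\ev(r)}{B}$ with $q = \delta_B^0(r)$. Coherence of the gluing gives $\ev(q) = \ev(r) \setminus B$, hence $A \cap B = \emptyset$, and a short computation yields $P * Q = \starter{\ev(r)}{A \cup B}$. Invoking the face-map composition identity $\delta_A^\nu \delta_B^\nu = \delta_{A \cup B}^\nu$ for disjoint $A, B$ then gives $p = \delta_A^0 \delta_B^0(r) = \delta_{A \cup B}^0(r)$, so $(p, P*Q, r) \in E$.

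For property~(3), a preliminary case analysis on how an edge label in $\Omegas$ can arise as a gluing $P * Q$ with $P, Q \in \Omegas$ reduces matters to three situations: both starters, both terminators, or at least one of $P, Q$ an identity. (In the mixed non-identity case, the gluing has neither a full source nor a full target, so it falls outside $\Omegas$.) The identity subcase is trivial since identity edges $(q, \id_{\ev(q)}, q)$ lie in $E$ by taking $A = \emptyset$ in the definition; pick $q = p$ or $q = r$ as appropriate. The two remaining subcases are converses of property~(2): given $(p, \starter{\ev(r)}{A \cup B}, r) \in E$ with a chosen disjoint decomposition $A \sqcup B$, intercept at $q = \delta_B^0(r)$, after which both $(q, Q, r)$ and $(p, P, q)$ appear in $E$ by the same face-map arithmetic.

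The only point that is not pure unwinding is the composition identity $\delta_A^\nu \delta_B^\nu = \delta_{A \cup B}^\nu$ for disjoint $A, B$, which is the main technical ``obstacle'' here. The statement \eqref{eq:prid} in the paper formally records only commutativity of face maps, but the composition identity is standard and implicit once multi-event face maps are viewed as iterated single-event ones, which is made well-defined precisely by that commutativity.
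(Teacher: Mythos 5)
Your proof is correct and follows essentially the same route as the paper's: item (1) from the totality of the face maps, item (2) from composing two lower (resp.\ upper) face maps into one, and item (3) by splitting such a face map, with your extra case analysis for (3) (identity factors; mixed non-identity starter/terminator gluings falling outside $\Omegas$) and your explicit flagging of the composition identity $\delta_A^\nu\delta_B^\nu=\delta_{A\cup B}^\nu$ being minor refinements of points the paper leaves implicit. Nothing further is needed.
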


\begin{proof}
  The first item is clear because $X$ has all face maps,
  so for all $q\in Q=X$ with $\ev(q)=U$ and all $S, T\subseteq U$,
  also $\delta_S^0(q), \delta_T^1(q)\in Q$.

  The second property is induced by compositionality of lower resp.\ upper face maps.
  To show it, assume first that $P, Q\in \St$,
  then $P = \starter{U'}{B}$ for some $U' \in \square$ and $Q = \starter{U}{C}$ where $U = U' \cup B$.
  These transitions are derived from the lower face maps
  $\delta_{B}^0: X[U]\to X[U\setminus B]$ and $\delta_{C}^0: X[U \setminus B]\to X[U \setminus (B\cup C)]$.
  Since $X$ is a precubical set,
  we also have the face map $\delta_{B\cup C}^0: X[U] \to X[U \setminus (B\cup C)]$
  which gives the transition $(p,P*Q,r)$.
  We argue similarly when $P,Q \in \Te$.

  The argument for the third item is inverse to the above:
  $(p, P*Q, r)\in E$ mimicks $\delta_{B\cup C}^0$ (assuming $P, Q\in \St$),
  which may be split into $\delta_C^0 \delta_B^0$. \qed
\end{proof}

We leave open the problem to give a precise characterization of ST-automata which are translations of HDAs.

\section{Conclusion}
\label{se:conclusion}

Several previous works have studied interval pomsets with interfaces,
their representations, and their associated operational model.
This paper unifies two different presentations
(as a combinatorial object and as a word on a non-free monoid)
and states how standard operations and transformations are expressed in both of these presentations. 

We have shown that to every interval ipomset (up to isomorphism)
corresponds an equivalence class of words, called step sequences,
and that the transformation from one to another induces an isomorphism of categories.
This implies that interval ipomsets are freely generated by certain discrete ipomsets
(starters and terminators) under the relation which composes subsequent starters and subsequent terminators.
We have also (constructively) exhibited a partial order on step sequences to represent subsumptions.
Finally, we have explored the operational model on step sequences, that is ST-automata,
and exposed translations between higher-dimensional automata (HDAs) and ST-automata.
However, from ST-automata to HDAs this translation does not preserve languages, we only have inclusion.
Stating the properties needed for an ST-automaton to have a precise HDA translation stays an open problem for now.

One thing which is missing from this paper is a treatment of interval representations of (interval) ipomsets.
We believe that using the work of Myers in \cite{DBLP:journals/order/Myers99} it may be shown
that any interval ipomset has a canonical interval representation
which is closely related to its sparse step decomposition, see Fig.~\ref{fi:intervalrep}.

\bibliographystyle{plain}
\bibliography{mybib}

\newcommand{\Afirst}[1]{#1} \newcommand{\afirst}[1]{#1}
\begin{thebibliography}{10}

\bibitem{conf/apn/AmraneBCF24}
Amazigh Amrane, Hugo Bazille, Emily Clement, and Uli Fahrenberg.
\newblock Languages of higher-dimensional timed automata.
\newblock In {\em {PETRI} {NETS}}, 2024.
\newblock Accepted.

\bibitem{conf/dlt/AmraneBFF24}
Amazigh Amrane, Hugo Bazille, Uli Fahrenberg, and Marie Fortin.
\newblock Logic and languages of higher-dimensional automata.
\newblock In {\em DLT}, 2024.
\newblock Accepted.

\bibitem{DBLP:conf/ictac/AmraneBFZ23}
Amazigh Amrane, Hugo Bazille, Uli Fahrenberg, and Krzysztof Ziemia{\'n}ski.
\newblock Closure and decision properties for higher-dimensional automata.
\newblock In Erika {\'{A}}brah{\'{a}}m, Clemens Dubslaff, and Silvia
  Lizeth~Tapia Tarifa, editors, {\em ICTAC}, volume 14446 of {\em {Lecture
  Notes in Computer Science}}, pages 295--312. {Springer-Verlag}, 2023.

\bibitem{DBLP:journals/tcs/BloomE96a}
Stephen~L. Bloom and Zolt{\'{a}}n {\'{E}}sik.
\newblock Free shuffle algebras in language varieties.
\newblock {\em {Theoretical Computer Science}}, 163(1{\&}2):55--98, 1996.

\bibitem{DBLP:journals/mscs/BloomE97}
Stephen~L. Bloom and Zolt{\'{a}}n {\'{E}}sik.
\newblock Varieties generated by languages with poset operations.
\newblock {\em {Mathematical Structures in Computer Science}}, 7(6):701--713,
  1997.

\bibitem{DBLP:journals/jacm/CastanedaRR18}
Armando Casta{\~{n}}eda, Sergio Rajsbaum, and Michel Raynal.
\newblock Unifying concurrent objects and distributed tasks:
  Interval-linearizability.
\newblock {\em Journal of the {ACM}}, 65(6):45:1--45:42, 2018.

\bibitem{DBLP:conf/adhs/Fahrenberg18}
Uli Fahrenberg.
\newblock Higher-dimensional timed automata.
\newblock In Alessandro Abate, Antoine Girard, and Maurice Heemels, editors,
  {\em ADHS}, volume~51 of {\em IFAC-PapersOnLine}, pages 109--114. Elsevier,
  2018.

\bibitem{DBLP:journals/lites/Fahrenberg22}
Uli Fahrenberg.
\newblock Higher-dimensional timed and hybrid automata.
\newblock {\em {Leibniz Transactions on Embedded Systems}}, 8(2):03:1--03:16,
  2022.

\bibitem{DBLP:conf/RelMiCS/FahrenbergJST20}
Uli Fahrenberg, Christian Johansen, Georg Struth, and Ratan~Bahadur Thapa.
\newblock Generating posets beyond {N}.
\newblock In Uli Fahrenberg, Peter Jipsen, and Michael Winter, editors, {\em
  RAMiCS}, volume 12062 of {\em {Lecture Notes in Computer Science}}, pages
  82--99. {Springer-Verlag}, 2020.

\bibitem{journals/mscs/FahrenbergJSZ21}
Uli Fahrenberg, Christian Johansen, Georg Struth, and Krzysztof Ziemia{\'n}ski.
\newblock Languages of higher-dimensional automata.
\newblock {\em {Mathematical Structures in Computer Science}}, 31(5):575--613,
  2021.

\bibitem{DBLP:conf/concur/FahrenbergJSZ22}
Uli Fahrenberg, Christian Johansen, Georg Struth, and Krzysztof Ziemia{\'n}ski.
\newblock A {Kleene} theorem for higher-dimensional automata.
\newblock In Bartek Klin, S{\l}awomir Lasota, and Anca Muscholl, editors, {\em
  CONCUR}, volume 243 of {\em LIPIcs}, pages 29:1--29:18. Schloss Dagstuhl -
  Leibniz-Zentrum f{\"{u}}r Informatik, 2022.

\bibitem{DBLP:journals/iandc/FahrenbergJSZ22}
Uli Fahrenberg, Christian Johansen, Georg Struth, and Krzysztof Ziemia{\'n}ski.
\newblock Posets with interfaces as a model for concurrency.
\newblock {\em {Information and Computation}}, 285(2):104914, 2022.

\bibitem{DBLP:conf/apn/FahrenbergZ23}
Uli Fahrenberg and Krzysztof Ziemia{\'n}ski.
\newblock A {Myhill}-{Nerode} theorem for higher-dimensional automata.
\newblock In Lu{\'{\i}}s Gomes and Robert Lorenz, editors, {\em {PETRI}
  {NETS}}, volume 13929 of {\em {Lecture Notes in Computer Science}}, pages
  167--188. {Springer-Verlag}, 2023.

\bibitem{DBLP:conf/concur/FanchonM02}
Jean Fanchon and R{\'{e}}mi Morin.
\newblock Regular sets of pomsets with autoconcurrency.
\newblock In Lubo{\v s} Brim, Petr Jan{\v c}ar, Mojm{\'{\i}}r K{\v
  r}et{\'{\i}}nsk{\'{y}}, and Anton{\'{\i}}n Ku{\v c}era, editors, {\em
  CONCUR}, volume 2421 of {\em {Lecture Notes in Computer Science}}, pages
  402--417. {Springer-Verlag}, 2002.

\bibitem{journals/mpsy/Fishburn70}
Peter~C. Fishburn.
\newblock Intransitive indifference with unequal indifference intervals.
\newblock {\em Journal of Mathematical Psychology}, 7(1):144--149, 1970.

\bibitem{book/Fishburn85}
Peter~C. Fishburn.
\newblock {\em Interval Orders and Interval Graphs: A Study of Partially
  Ordered Sets}.
\newblock Wiley, 1985.

\bibitem{DBLP:journals/tcs/Gischer88}
Jay~L. Gischer.
\newblock The equational theory of pomsets.
\newblock {\em {Theoretical Computer Science}}, 61:199--224, 1988.

\bibitem{DBLP:journals/fuin/Grabowski81}
J.~Grabowski.
\newblock On partial languages.
\newblock {\em {Fundamenta Informaticae}}, 4(2):427, 1981.

\bibitem{DBLP:journals/jlp/HoareMSW11}
Tony Hoare, Bernhard M{\"{o}}ller, Georg Struth, and Ian Wehrman.
\newblock Concurrent {Kleene} algebra and its foundations.
\newblock {\em {Journal of Logic and Algebraic Methods in Programming}},
  80(6):266--296, 2011.

\bibitem{DBLP:journals/jlp/HoareSMSZ16}
Tony Hoare, Stephan van Staden, Bernhard M{\"{o}}ller, Georg Struth, and
  Huibiao Zhu.
\newblock Developments in concurrent {Kleene} algebra.
\newblock {\em {Journal of Logic and Algebraic Methods in Programming}},
  85(4):617--636, 2016.

\bibitem{DBLP:series/sci/2022-1020}
Ryszard Janicki, Jetty Kleijn, Maciej Koutny, and Lukasz Mikulski.
\newblock {\em Paradigms of Concurrency - Observations, Behaviours, and Systems
  - a Petri Net View}, volume 1020 of {\em Studies in Computational
  Intelligence}.
\newblock {Springer-Verlag}, 2022.

\bibitem{DBLP:journals/fuin/JanickiK19}
Ryszard Janicki and Maciej Koutny.
\newblock Operational semantics, interval orders and sequences of antichains.
\newblock {\em {Fundamenta Informaticae}}, 169(1-2):31--55, 2019.

\bibitem{DBLP:journals/iandc/JanickiY17}
Ryszard Janicki and Xiang Yin.
\newblock Modeling concurrency with interval traces.
\newblock {\em {Information and Computation}}, 253:78--108, 2017.

\bibitem{DBLP:conf/esop/KappeB0Z18}
Tobias Kapp{\'{e}}, Paul Brunet, Alexandra Silva, and Fabio Zanasi.
\newblock Concurrent {K}leene algebra: Free model and completeness.
\newblock In Amal Ahmed, editor, {\em {ESOP} 2018}, volume 10801 of {\em
  {Lecture Notes in Computer Science}}, pages 856--882. {Springer-Verlag},
  2018.

\bibitem{DBLP:journals/cacm/Lamport78}
Leslie Lamport.
\newblock Time, clocks, and the ordering of events in a distributed system.
\newblock {\em Commun. {ACM}}, 21(7):558--565, 1978.

\bibitem{DBLP:journals/dc/Lamport86}
Leslie Lamport.
\newblock On interprocess communication. {P}art {I:} basic formalism.
\newblock {\em Distributed Computing}, 1(2):77--85, 1986.

\bibitem{DBLP:journals/order/Myers99}
Amy Myers.
\newblock Basic interval orders.
\newblock {\em Order}, 16(3):261--275, 1999.

\bibitem{Pratt86pomsets}
Vaughan~R. Pratt.
\newblock Modeling concurrency with partial orders.
\newblock {\em J. Parallel Programming}, 15(1):33--71, Feb 1986.

\bibitem{DBLP:journals/tcs/Glabbeek06}
Rob~J. van Glabbeek.
\newblock \afirst{On} the expressiveness of higher dimensional automata.
\newblock {\em {Theoretical Computer Science}}, 356(3):265--290, 2006.
\newblock See also~\cite{DBLP:journals/tcs/Glabbeek06a}.

\bibitem{DBLP:journals/tcs/Glabbeek06a}
Rob~J. van Glabbeek.
\newblock Erratum to ``{O}n the expressiveness of higher dimensional
  automata''.
\newblock {\em {Theoretical Computer Science}}, 368(1-2):168--194, 2006.

\bibitem{Wiener14}
Norbert Wiener.
\newblock A contribution to the theory of relative position.
\newblock {\em Proceedings of the Cambridge Philosophical Society},
  17:441--449, 1914.

\end{thebibliography}

\end{document}